\documentclass[a4paper,conference]{IEEEtran}
\IEEEoverridecommandlockouts
\usepackage{cite}
\usepackage{algorithmic}
\usepackage{graphicx}
\usepackage{textcomp}
\usepackage{xcolor}
\usepackage{makeidx}  
\usepackage[utf8]{inputenc}
\usepackage{xspace}
\usepackage{microtype}
\usepackage{pifont}
\usepackage{amssymb}
\usepackage{amsmath,amsthm,amssymb,amsfonts}
\usepackage[noamssymb]{dlfltxbcodetips}
%
\newcommand{\lrstep}[1]{\ensuremath{\raisebox{-2pt}{$\xrightarrow{#1}$}}}  
\newcommand{\C}[1]{\mathcal{#1}}
\renewcommand{\S}{\mathcal{S}}
\newcommand{\V}{\mathcal{V}}
\newcommand{\X}{\mathcal{X}}
\newcommand{\W}{\mathcal{W}}
\newcommand{\T}{\mathcal{T}}
\newcommand{\N}{\mathcal{N}}
\newcommand{\q}{\mathcal{Q}}
\newcommand{\p}{\mathcal{P}}
\newcommand{\E}{\mathsf{E}}
\newcommand{\ie}{\emph{i.e.}\xspace}
\newcommand{\eg}{\emph{e.g.}\xspace}
\newcommand{\etc}{etc.\xspace}
\newcommand{\wrt}{w.r.t.\xspace}
  
\usepackage{esvect}
\newcommand{\vect}[1]{\overline{#1}}

\newcommand{\ok}{\mathsf{ok}}

\newcommand{\priv}{\mathsf{priv}} 
\newcommand{\aenc}[0]{\mathsf{aenc}}

\newcommand{\pub}{\mathsf{pub}}

\newcommand{\sdec}[0]{\mathsf{dec}}
\newcommand{\sk}[0]{\mathsf{sk}}  
\newcommand{\pk}[0]{\mathsf{pk}}  
\newcommand{\eq}[0]{\mathsf{eq}}  
\newcommand{\AND}[0]{\mathsf{and}}  
\newcommand{\proj}{\pi}
\newcommand{\projl}[0]{\proj_1} 
\newcommand{\projr}[0]{\proj_2}
\newcommand{\theo}{=_\E}        
\newcommand{\nottheo}{\not=_\E}        

\newcommand{\gfun}{\mathsf{g}}  
\newcommand{\redc}{\!\downarrow\!} 
\newcommand{\redcb}{\!\ndownarrow\!} 
\newcommand{\id}{\mathrm{id}}

\newcommand{\com}{\mathsf{commit}}
\newcommand{\open}{\mathsf{open}}
\newcommand{\bl}{\mathsf{blind}}
\newcommand{\sign}{\mathsf{sign}}
\newcommand{\pkv}[0]{\mathsf{pkv}}  
\newcommand{\versign}{\mathsf{verSign}}
\newcommand{\key}{\mathsf{key}}
\newcommand{\getMess}{\mathsf{getMess}}
\newcommand{\unbl}{\mathsf{unblind}}
\newcommand{\ZK}{\mathsf{ZK}}

\newcommand{\Ch}{\mathcal{C}}
\newcommand{\phase}[1]{#1:} 
\newcommand{\Phase}[1]{\mathtt{phase}(#1)} 
\newcommand{\pair}[2]{\langle #1,#2 \rangle}

\newcommand{\Out}{\mathtt{out}}
\newcommand{\In} {\mathtt{in}}

\newcommand{\new}{\nu}
\newcommand{\choice}[2]{\mathsf{choice}[#1,#2]}
\newcommand{\fst}[0]{\mathsf{fst}}
\newcommand{\snd}[0]{\mathsf{snd}}

\newcommand{\Let}{\mathsf{let}}
\newcommand{\Else}{\mathsf{else}}
\newcommand{\If}{\mathsf{if}}
\newcommand{\Then}{\mathsf{then}}
\newcommand{\monthen}{\mathsf{then}} 
\newcommand{\monelse}{\mathsf{else}} 

\newcommand{\refer}{\mapsto} 

\newcommand{\vars}{{\mathrm{vars}}}  
\newcommand{\fv}{\mathrm{fv}}    
\newcommand{\sint}[1]{\lrstep{\ensuremath{#1}}}          

\newcommand{\taut}{\tau_\monthen}
\newcommand{\taue}{\tau_\monelse}
\newcommand{\dom}{\mathrm{dom}}
\newcommand{\tr}{\mathsf{tr}}
\newcommand{\obs}{\mathsf{obs}}

\newcommand{\estat}{\sim}
\newcommand{\eint}{\approx}         

\renewcommand{\th}{\mathsf{th}}       

\newcommand{\tamarin}{Tamarin\xspace}
\newcommand{\proverif}{ProVerif\xspace}

\newcounter{condiC}             
\setcounter{condiC}{0}
\newenvironment{condi}[1][]{\refstepcounter{condiC}\par\medskip
   \noindent \textbf{Condition~\thecondiC. (#1)} \rmfamily}{\medskip}




\newcommand{\luccaN}[1]{#1} 
\newcommand{\toRM}[1]{\textcolor{gray}{#1}}

\newcommand{\lumN}[1]{}



\renewcommand{\toRM}[1]{}

\newcommand{\Res}{\mathrm{Res}}
\newcommand{\BB}{\mathrm{BB}}
\newcommand{\roles}{\mathcal{R}}
\newcommand{\nID}[1]{{\mathbf{n^{id}_{#1}}}}
\newcommand{\nV}[1]{{\mathbf{n^{v}_{#1}}}}
\renewcommand{\vect}[1]{{\mathbf{#1}}}
\newcommand{\biproc}{\mathcal{B}}

\newcommand{\Extract}[0]{\mathrm{Extract}}

\newcommand{\cmark}{\ding{51}}%
\newcommand{\xmark}{\ding{55}}%

\newcommand\scalemath[2]{\scalebox{#1}{\mbox{\ensuremath{\displaystyle #2}}}}

\theoremstyle{definition}
\newtheorem{example}{Example}
\newtheorem{definition}{Definition}
         
\usepackage{url}
\usepackage{cite}
\usepackage{multirow}
\usepackage{thmtools}
\usepackage{thm-restate}
\usepackage{graphicx}
\usepackage{cleveref}
\usepackage{msc5}
\usepackage{enumitem}	
\setlength{\parindent}{0.5\parindent}
\setlength\marginparwidth{35pt}

\hyphenation{op-tical net-works semi-conduc-tor con-di-tionals}
\def\BibTeX{{\rm B\kern-.05em{\sc i\kern-.025em b}\kern-.08em
    T\kern-.1667em\lower.7ex\hbox{E}\kern-.125emX}}

\thispagestyle{plain}
\pagestyle{plain}

\begin{document}

\title{Improving Automated Symbolic Analysis \\
of Ballot Secrecy for E-voting Protocols:\\
A Method Based on Sufficient Conditions%
\thanks{This work was conducted when
C. C. was at University of Oxford, UK and
when L. H. was working at
LSV, CNRS \& ENS Cachan, France and then at ETH Zurich, Switzerland.
This work was partially supported by
a COST grant (COST\--STSM\--IC1306\--33371)
granted by the European Cooperation in Science and Technology (Crypto Action) and
a mobility grant granted by the Doctoral School of the Paris-Saclay University.
}
     }
\author{
\IEEEauthorblockN{
  Lucca Hirschi}
\IEEEauthorblockA{ 
\textit{Inria \& LORIA}\\
Nancy, France \\
lucca.hirschi@inria.fr}
 \and
\IEEEauthorblockN{
    Cas Cremers}
\IEEEauthorblockA{\textit{CISPA Helmholtz Center (i.G.)} \\
Saarbruecken, Germany \\
cremers@cispa.saarland}
}

\maketitle

\begin{abstract}
We advance the state-of-the-art in automated symbolic
analysis of ballot secrecy for e-voting protocols by proposing a method
	based on analysing three conditions that
together imply ballot secrecy.  

Our approach has two main advantages over existing automated approaches.  
The first is a substantial expansion of the class of
protocols and threat models that can be automatically analysed:
our approach can systematically deal with (a) honest
authorities present in different phases, (b) threat models in which no
dishonest voters occur, and (c) protocols whose ballot secrecy depends
on fresh data coming from other phases.
The second advantage is that our approach can significantly improve verification
efficiency, as the individual conditions are often simpler to verify.
E.g., for the LEE protocol, we obtain a speedup of over two
orders of magnitude.

We show the scope and effectiveness of our approach using ProVerif in
several case studies, including the FOO, LEE, JCJ, and Belenios
	protocols. 
\end{abstract}




\renewcommand{\refer}{\!\mapsto\!}


\section{Introduction}

\toRM{Gray text: to be removed (comments or parts that we may remove). When removing
all grey texts (uncomment a line at the end of header\_com.tex then we obtain
13.5 pages.}

\toRM{Make sure it is clear that:
1. main contrib is extended scope; extended scope $=$ more protocols can be analysed $+$
already analysed protocols can be analyses more precisely, faithfully (\ie
more threat models)
2. theorem is very generic (any crypto primitives, proto in our large
class), but then conditions have to be verified so we inherit limitations
from those tools}

\looseness=-1
There have been substantial advances during the last years in the
field of e-voting protocols. Many new approaches have been developed,
and the relevant security properties have become better
understood and agreed upon~\cite{bernhard2015sok,cortier2016sok,cortier2013attacking,cortier2018voting}.
One of the main properties is that voters'
votes remain private, which is known as {\em ballot secrecy}. Designing
protocols that achieve this has proven subtle: many
vulnerabilities have been found in previously proposed
protocols~\cite{cortier2013attacking,KR-eurosp16},
motivating the need for improved analysis techniques to support the
development of e-voting systems. Unfortunately, the complexity of
e-voting systems makes {\em computational proofs} hard, e.g., the
computational proof of Helios from~\cite{cortier2017machine} required
one person-year. 

For classical security protocols, there is mature tool
support in the {\em symbolic model}~\cite{Tamarin,avantssar-tacas12,BlanchetCSFW01,Cr2008Scyther},
which enables detecting many
flaws during the protocol design phase, or later, as new threat models are
considered.
Verification in this more abstract model allows for a high level of automation.
This notably enables security analyses exploring various threat models in order to provide
more fine-grained guarantees (see \eg~\cite{basin2014know,basin2018formal,bhargavan2017verified}).
However, these tools traditionally did not handle e-voting
protocols~\cite{surveyJLAMP16}.
Recently, new symbolic methods have been proposed~\cite{vote-CSF08-maffei,vote-CSF16,vote-ESO16,dreier2017beyond,cortier2017type}
to analyse e-voting protocols. 
However, the applicability of these methods is still extremely limited
both in the type of protocols that they can deal with and the type of security
properties (including threat models) that they analyse (as acknowledged by~\cite{vote-ijcar,vote-ESO16,surveyJLAMP16}).

The reasons for these limitations interact in a complex way
with existing approaches. One of the main reason though
is that ballot secrecy is a behavioural equivalence-based property which is
notoriously more difficult to analyse than the more classical reachability properties.
%
Two effective tools that can prove such equivalence properties for an unbounded number of sessions
are ProVerif~\cite{BlanchetCSFW01} and Tamarin~\cite{Tamarin}.
These tools can deal with many typical primitives that are used in e-voting protocols~\cite{vote-CSF08-maffei,vote-CSF16,dreier2017beyond,basin2018alethea}.
  %
However, they check for an abstraction of equivalence (\ie~{\em
diff-equivalence}) that is rarely met by typical encodings of e-voting
protocols and ballot secrecy.  Thus, in most cases, the analysis
results in a \emph{spurious attack} (i.e., an attack that is an
artefact of the abstraction and not a real attack on the protocol), and
no conclusion can be drawn about the protocol.

Despite recent efforts to improve the accuracy of the equivalence being checked
(\eg, the swapping technique~\cite{vote-CSF16,dreier2017beyond} and the small-attack property~\cite{vote-ESO16}),
this still effectively limits the class of e-voting protocols and the threat models
to which existing tools can be successfully applied.
More precisely, 
we have identified the following limitations from analysing several case
studies and threat models:
\begin{itemize}
\item[(a)] Spurious attacks when {\em honest authorities are present in
	different phases} of the voting process.
For many threat models, this excludes modelling a registrar that distributes credentials
in a registration phase and then commits credentials of eligible voters
to the ballot box in a later phase, as in JCJ~\cite{juels2005coercion}
and Belenios~\cite{cortier2014election}.

\item[(b)] Spurious attacks with ProVerif when ballot secrecy notably relies on the {\em freshness of some data coming from previous phases}.
    For example, such data can be credentials created during a registration
    phase, as in JCJ and Belenios.

  \item[(c)]
    Spurious attacks for threat models in which {\em no dishonest voter} is assumed
    (we will explain later why this is a more complex case than
    with dishonest voters that we handle as well).
%

\item[(d)] The current techniques have {\em scalability issues} (for reasons explained later).
For instance, we were not able to obtain results in less than 2 days for the simple protocol LEE~\cite{DKR-jcs09}.
\end{itemize}



\noindent
\textbf{\em Contributions.}
In this work, we advance the state-of-the art in automated
symbolic verification of ballot secrecy in e-voting protocols.
Our key idea is to soundly modularize ballot secrecy verification.
%
We develop three tight conditions on e-voting protocols
and prove that, together, they imply ballot secrecy. The three
conditions in our theorem are inspired by our analysis of the
different types of attacks on ballot secrecy.  Since each condition
focuses on one specific aspect of ballot secrecy, it is typically
simpler to analyse the combination of the three conditions than to verify
ballot secrecy directly, as was done in prior works.
Our conditions and our analysis algorithm 
give rise to a new method to verify ballot secrecy, improving the
state of the art in several aspects.

First, our approach expands the class of protocols
and threat models that can be automatically analysed.
We notably address the limitations of the state-of-the-art (a-c) mentioned above.
%
As demonstrated by our case studies,
providing support for such features is essential for considering 
flexible threat models and for establishing more precise security
guarantees that also take important practical aspects of
protocols into account, such as authentication or registration phases,
which are often not considered in the literature.

\looseness=-1
Second, our approach can significantly improve verification efficiency (d).
  The increased efficiency can occur for two main reasons.
  First, because each of our conditions focus on one aspect of the problem and simplifies parts 
  not related to that aspect, it involves smaller processes that are
  typically easier to verify.
  Second, previous techniques such as the swapping technique suffer from an exponential blow up
  related to the number of processes in each phase.
  In practice, we typically observe a speedup of over two orders of magnitude
  and even cause the analysis to terminate in cases where it did not do
  so before.
  

  \looseness=-1
  Third, we use our approach to analyse several new case studies.
  Thanks to the flexibility and the large class of protocols we can deal with,
  we are able to analyse a multitude of different threat models allowing comprehensive
  comparisons.
  Moreover, thanks to the aforementioned advantages,
  our approach is able to systematically take the registration phase
  into account, whereas prior works often consider registrars as fully honest and not
  model them.
  We successfully automatically analysed the FOO, Lee, JCJ, and Belenios
  protocols with respect to various threat models.
  We show that our theorem also applies to the Okamoto protocol.

\looseness=-1
  We also revisit the state-of-the-art definition of ballot secrecy~\cite{KremerRyan2005,DKR-jcs09}
and propose a more accurate variant (\ie sound, with less spurious attacks) of ballot secrecy whose automated verification
does not rely on synchronisation barriers~\cite{KremerRyan2005,DKR-jcs09,vote-CSF16,dreier2017beyond,vote-ifip},
which was one of the cause of limitations (a) and (c).

While we present our work in the ProVerif framework, our results
are applicable beyond this specific tool. 
Indeed, our conditions and our Main Theorem are stated in a standard applied $\pi$-calculus framework.
We also believe that our conditions shed light on three crucial aspects that
e-voting protocols should enforce; thus improving our understanding of the complex notion
ballot secrecy.
Finally, the fact that our approach is effective for the analysis of
ballot secrecy also suggests that it may be possible to improve the analysis of other e-voting
requirements by adopting a similar strategy.
%
%

\noindent
\textbf{\em Outline.}
We first provide intuition for our approach in~\Cref{sec:intuition}.
In \Cref{sec:background}, we present the symbolic model we use to represent
protocols and security properties.
We then describe our framework in \Cref{sec:class}, notably defining
ballot secrecy and the class
of e-voting protocols that we deal with.
Next, we formally define our conditions and state our Main Theorem in \Cref{sec:conditions}.
We show the practicality of our approach in \Cref{sec:caseStudies} by explaining how to verify
our conditions and presenting case studies.
Finally, we discuss related work in \Cref{sec:back:stateArt} and conclude
in \Cref{sec:conclusion}.


\noindent
\textbf{\em Appendix with Supplementary Material.}
In~\Cref{subsec:term} -- \ref{ap:model:diff}, we describe the formal model in detail.
\Cref{sec:app:conditions} describes the conditions in detail, and
\Cref{sec:ap:proofs-thm} contains the full proofs.
Some further detail on case studies is described
in~\Cref{ap:caseStudies}, and~\Cref{sec:app:swapping}
and~\ref{sec:app:swaprestr} provides further
explanations and examples for the swapping technique.

\section{Intuition behind our approach}
\label{sec:intuition}

\paragraph{\textbf{Links \& Ballot Secrecy}}
Ballot secrecy boils down to ensuring that an attacker cannot establish a
meaningful link between a specific
voter and a specific vote (that this voter is willing to cast).
For instance, a naive example of such a link occurs when
a voter outputs a signed vote in the clear, explicitly linking
his vote to his identity.
However, in more realistic e-voting protocols, such links can be very
complex, possibly relying on long transitive chains
of different pieces of data from different outputs. For example,
if an attacker is able to link
a credential with the identity of the recipient voter during a registration
phase,
and then the voter anonymously sends
his vote along with the credential during a casting phase,
then the attacker may be able to link the vote to the voter.

\looseness=-1 As noted before, diff-equivalence (as an under-approximation of behavioural
equivalence) is rarely appropriate to directly verify ballot secrecy~\cite{vote-CSF16,surveyJLAMP16}.
An underlying reason for this is that considering diff-equivalence gives the
attacker more additional structural links than when considering the intended
behavioural equivalence. This often leads to spurious attacks.

\paragraph{\textbf{Informal Presentation of the Conditions}}
We 
analysed typical attacks and the underlying links. We classified them
and identified three classes of links leading to privacy breaches.
The purpose of each of our conditions is to guarantee the absence of links from the corresponding class.
Our Main Theorem states that together, the three conditions suffice to
ensure ballot secrecy.

\noindent
\textit{(Dishonest Condition)}
\looseness=-1
By adopting a malicious behaviour, the attacker may be able to link messages that would not be linkable in 
the intended, honest execution.
For instance, if the attacker sends tampered data to a voter, the
attacker may be able to later observe the tampered part in different
messages, and conclude that it comes from the same voter, which allows
the attacker to establish possibly harmful links.
Our first condition essentially requires that a voting system is indistinguishable for the attacker from a voting system
in which at the beginning of each phase, all agents forget everything
about the past phases
and pretend that everything happened
as expected, \ie, as in an honest execution.
The previous example would violate the condition, because in the second system, the attacker
would not be able to observe the tainted data.
\luccaN{Interestingly, this condition is mostly a reachability property that does not suffer from
the lack of precision of diff-equivalence.}

\noindent
\textit{(Honest Relations Condition)}
Even in the expected honest execution, the attacker may be able to
exploit useful links. 
Thanks to the previous condition, 
we can focus on a system where each role is split into sub-roles for
each phase.
This allows us to verify the absence of the former relations using
diff-equivalence, without giving the attacker
spurious structural links, as mentioned above.

\noindent
\textit{(Tally Condition)}
We take into account the tally outcome, which enables establishing 
more links. Typically, the attacker may
link an identity to a vote if it can 
forge valid ballots related to (\ie, containing the same vote)
data that can be linked to an identity.
This introduces a bias in the tally outcome that can reveal 
the vote in the forged ballot.
This attack class strictly 
extends {\em ballot independence attacks}~\cite{cortier2013attacking}.
The Tally Condition requires that when a valid ballot was forged by the attacker
then it must have been forged without meaningfully using voter's data already
linked to an identity.

\section{Model}
\label{sec:background}
\label{sec:background:model}


We model security protocols using the standard process algebra in the style of the
dialect of Blanchet \emph{et al.}~\cite{BlanchetAbadiFournetJLAP08}
(used in the ProVerif tool),
that is inspired by the applied $\pi$-calculus~\cite{AbadiFournet2001}.
Participants 
are modelled as processes, and the
exchanged messages are modelled using a term algebra.

\looseness=-1
Since most of the e-voting protocols are structured in a sequence of {\em phases}
(\eg {\em registration phase}, {\em voting phase}, {\em tallying
phase}), our model includes explicit phases.
We briefly present this model in this section; a detailed presentation can be found
in Appendix~\ref{sec:app:model}.


\paragraph{\textbf{Term algebra}}
\label{subsec:term}
We use a term algebra to model messages
built and manipulated using various cryptographic primitives.
We assume an infinite set $\N$ of \emph{names}, used to represent
keys and nonces; and two infinite and disjoint sets of \emph{variables}
$\X$ (to refer to unknown parts of messages expected
by participants) and $\W$ (called {\em handles}, used to store messages learned by the attacker).
%
We consider a \emph{signature}~$\Sigma$ (\ie a set of function
symbols with their arity). $\Sigma$ is the union
of two disjoint sets:
the \emph{constructor} $\Sigma_c$ and \emph{destructor} $\Sigma_d$ symbols.
%
Given a signature $\mathcal{F}$, and a set of atoms 
$\mathsf{A}$, we denote by $\T(\mathcal{F},\mathsf{A})$ the set of terms built
using atoms from $\mathsf{A}$ and function symbols from $\mathcal{F}$.
The terms in $\T(\Sigma_c, \N)$ are called {\em messages}.
Sequences of elements are shown bold (\eg $\vect{x},\vect{n}$).
The application of a substitution $\sigma$ to a term $u$ is written
$u\sigma$, and $\dom(\sigma)$ denotes its \emph{domain}.

As in the process calculus presented in~\cite{BlanchetAbadiFournetJLAP08}, 
messages are subject to an equational theory
used for
for modelling algebraic properties of cryptographic primitives.
Formally, we consider a congruence~$\theo$ on $\T(\Sigma_c,\N \cup \X)$,
generated from a set of equations $\E$ over $\T(\Sigma_c,\X)$.
We say that a function symbol is {\em free} when it does not occur
in $\E$.
We assume the existence of a \emph{computation relation}
$\redc : \T(\Sigma,\N)\times\T(\Sigma_c,\N)$
that gives a meaning to destructor symbols.
In \Cref{sec:app:comp}
we describe how this relation can be obtained from \emph{rewriting systems} and
give a full example.
%
For modelling purposes, we also split the signature $\Sigma$ into two
parts, namely $\Sigma_\pub$ (public function symbols, known
by the attacker) and $\Sigma_\priv$ (private function symbols).
An attacker builds his own messages by applying public function symbols to
terms he already knows and that are available through variables
in~$\W$. Formally, a computation done by the attacker is a
\emph{recipe} (noted $R$), \ie, a term in $\T(\Sigma_\pub,\W)$.
%

\begin{example}
\label{ex:term}
Consider the signature\\[0.0mm]
\null\hfill
$\begin{array}{rl}
   \Sigma_c =& \{\eq,\; \langle \, \rangle,\; \sign,\;\pkv,\; \bl,\;\unbl,\;\com,\; \ok\}\\
   \Sigma_d =& \{\versign,\;\open,\; \projl, \; \projr, \; \eq\}.
 \end{array}$
\hfill\null\\
The symbols $\eq,\langle\rangle, \sign,\versign,\bl,\unbl,\com$ and
	$\open$ have
arity 2 and represent equality test, pairing, signature, signature verification, blind signature, unblind, commitment and
commitment opening. 
The symbols $\projl,\projr$ and $\pkv$ have arity 1 and represent projections
	and the agents' verification keys.
Finally, $\ok$ is a constant symbol (\ie arity 0).
%
To reflect the algebraic properties of the blind signature, we may
consider $\theo$ generated by the following equations:\\[0.5mm]
\null\hfill$
\begin{array}{rcl}
\unbl(\sign(\bl(x_m,y),z_k),y) &=& \sign(x_m,z_k)\\
\unbl(\bl(x_m,y),y) &=& x_m. \\
\end{array}$\hfill\null\\[0mm]
%
%
\looseness=-1
Symbols in $\Sigma_d$ 
can be given a semantics through the following rewriting rules: 
$\versign(\sign(x_m,z_k),\pkv(z_k)) \! \to\! x_m$, 
$\open(\com(x_m,y),y)\! \to\! x_m, 
\proj_i(\langle x_1,x_2\rangle) \to x_i,\linebreak[4]  
\eq(x,x) \!\to\! \ok$.
     %
     %
\end{example}

 \begin{figure}[t]
   \null\hfill$
  \begin{array}{rclcl}
    P,Q &:=&  0 & & \mbox{null}\\[0.5mm]
    &\mid & \In(c, x).P && \mbox{input}\\[0.5mm]
    &\mid&\Out(c, u).P &&\mbox{output} \\[0.5mm]
    &\mid& \Let \; x = v \;\In \; P \; \Else \; Q&&
    \mbox{evaluation}\\[0.5mm]
    &\; \mid \; & P \mid Q&&\mbox{parallel}\\[0.5mm]
    &\mid& \new n. P && \mbox{restriction} \\[0.5mm]
    &\mid&  !P && \mbox{replication} \\[0.5mm]
    &\mid&  \phase{i} P && \mbox{phase} \\[0.5mm]
  \end{array} 
$\hfill\null\\
  \begin{small}
    where $c \in \Ch$, $x \in \X$, $n \in \N$,
    $u \in \T(\Sigma_c, \N \cup \X)$, $i\in\mathbb{N}$, and
    $v\in\T(\Sigma, \N\cup\X)$.
  \end{small}
\vspace{-5pt}
  \caption{Syntax of processes}
 \label{fig:syntax}
 \end{figure}

\paragraph{\textbf{Process algebra}}

We assume $\Ch_\pub$ and $\Ch_\priv$ are disjoint sets of public and private channel
names and note $\Ch=\Ch_\pub\cup\Ch_\priv$. 
Protocols are specified using the syntax
in Figure~\ref{fig:syntax}.
%
%
Most of the constructions are standard.
The construct  $\Let \; x = v \;\In \;P \; \Else \; Q$
tries to evaluate the term $v$ and in case of success, 
\ie when $v \redc u$ for some message $u$, the process $P$ 
in which $x$ is substituted by~$u$ is executed;
otherwise the process $Q$ is executed.
Note also that the $\Let$ instruction together with the
$\eq$ theory (see Example~\ref{ex:term}) can encode the usual
conditional construction.
The replication $!P$ behaves like an infinite parallel composition
$P|P|P|\ldots$.
The construct $\phase{i} P$ indicates that the process
$P$ may only be executed when the current phase is $i$.
%
%
The construct $\nu n.P$ allows to create a new, fresh name $n$;
it binds $n$ in $P$ which is subject to $\alpha$-renaming.
For a sequence of names $\vect{n}$, we may note $\nu\vect{n}.P$
to denote the sequence of creation of names in $\vect{n}$ followed by $P$.
For brevity, we sometimes omit ``$\Else\;0$''
and null processes at the end of processes.
A process $P$ is {\em ground} if it has no free variable 
(\ie, a variable not in the scope of an input or a $\Let$ construct).
A process is {\em guarded} if it is of the form $\phase{i} P$.

\looseness=-1
The operational semantics of processes is given by a labelled transition
system over \emph{configurations} (denoted by $K$) $(\p;\phi;i$)
made of a multiset $\p$ of guarded ground processes,
$i\in\mathbb{N}$ the current phase, and
a {\em frame} $\phi = \{w_j \refer u_j\}_{j\in\C J}$
(\ie a substitution where $\forall j\in\C J, w_j\in\W,\ u_j\in\T(\Sigma_c,\N)$).
The frame~$\phi$ represents the messages known to the attacker.
Given a configuration~$K$, $\phi(K)$ denotes its frame.
We often write $P \cup \p$  instead of $\{P\} \cup \p$
and implicitly remove null processes from configurations.

\renewcommand{\key}{\mathsf{key}}

\begin{figure*}[t]
\centering
$
\begin{array}{ll}
\textsc{In} & (\phase{i} \In(c,x).P \cup \p; \phi;i)\; \lrstep{\In(c,R)} \; (\phase i P \{x \mapsto u\}
              \cup \p; \phi; i)\\[-0.5mm]&
\hfill
\mbox{
with $c\in\Ch_\pub$ where  $R$ is a recipe  such that $R\phi\redc u$ for
some message $u$}\\
\textsc{Out}& (\phase i \Out(c,u).P \cup \p; \phi; i) \; \lrstep{\Out(c, w)} \; (\phase i P \cup \p; \phi \cup \{w \refer u\}; i)
\\[-0.5mm]&\hfill
\mbox{
\ \ \ \ with $c\in\Ch_\pub$ and $w$ a fresh variable in $\W$}\\
\textsc{Com}& (\phase i \In(c,x).P \cup \phase i \Out(c,u).Q \cup \p; \phi; i) \; \lrstep{\tau} \; (\phase i P\{x\mapsto u\} \cup \phase i Q \cup \p; \phi; i)
\\[-0.5mm]&\hfill
\mbox{
with $c\in\Ch_\priv$}\\
\textsc{Let}&(\phase i \Let \; x = v \; \In \; P \; \Else \; Q\cup \p; \phi; i) \; \lrstep{\tau_\monthen} \;
(\phase i P\{x \mapsto u\} \cup
\p; \phi; i)
\\[-0.5mm]&\hfill
\mbox{
when $v\redc u$ for some $u$}\\
\textsc{Let-Fail}&(\phase i \Let \; x = v \; \In \; P \; \Else \; Q\cup \p; \phi; i) \; \lrstep{\tau_\monelse} \;
(\phase i Q \cup \p; \phi; i) 
\hfill \mbox{when $v\redcb$}\\[0.5mm]
\textsc{New}& (\phase i \new  {n}.P \cup \p; \phi; i) \;
 \lrstep{\tau} \; (\phase i P\cup \p; \phi; i) \;\;\;\;\;
\hfill \mbox{where $n$ is a fresh name from $\N$}\\
\textsc{Next}& (\p; \phi; i) \; \lrstep{\mathtt{phase}(j)} \;
 (\p; \phi; j)
\null\hfill
\mbox{for some $j\in\mathbb{N}$ such that $j>i$}\\
\textsc{Par}& (\{\phase i (P_1 \mid P_2)\} \cup \p; \phi; i) \; \lrstep{\tau} \;
 (\{\phase i P_1, \phase i P_2\} \cup \p; \phi; i)\\ 
\end{array}$\\[0.5mm]
$\begin{array}{llcrr}
\textsc{Phase}&
(\phase i \phase j P \cup \p; \phi; i) \; \lrstep{\tau} \;
 (\phase j P \cup \p; \phi; i) &\hspace*{20pt}&
\textsc{Repl}& (\phase i\;!P \cup \p; \phi; i) \; \lrstep{\tau} \;
 (\phase i P \,\cup\, \phase i\;!P \cup \p; \phi; i)
\end{array}
$
\caption{Semantics for processes}
\label{fig:semantics}
\end{figure*}


The operational semantics of a process 
is given by the relation
$\lrstep{\alpha}$
defined as the least relation over configurations satisfying the rules
in Figure~\ref{fig:semantics}.
For all constructs, phases are just handed over to continuation processes.
The rules are quite standard and correspond to the
intuitive meaning of the syntax given above. 
\toRM{The first rule \textsc{In} 
allows the attacker to send a message on a public channel as long as it is
the result of a computation done by applying a recipe to
his current knowledge.
The second rule \textsc{Out} corresponds to the output of a term on a public channel:
the corresponding term is added to the frame
of the current configuration.
The rule \textsc{Com} corresponds to an internal communication on a private channel
that the attacker cannot eavesdrop on nor tamper with.
Finally, the rule \textsc{Phase} allows a process to progress 
to its next phase
and the rule \textsc{Next} allows to drop the current phase and
progress irreversibly to a greater phase.}
%
The rules \textsc{In,Out,Next} are the only rules that produce observable
actions (\ie, non $\tau$-actions).
%
The relation $\lrstep{\alpha_1 \ldots \alpha_n}$ between
configurations (where~$\alpha_i$ are actions) 
is defined as the transitive closure of~$\lrstep{\alpha}$. 


\begin{example}
\label{ex:execution}
We use the FOO protocol~\cite{fujioka1992practical}
(modelled as in~\cite{vote-CSF16})
 as a running example.
FOO involves voters and a registrar role.
In the first phase, a voter commits and then blinds its vote and
sends this blinded commit signed with his own signing key $\key(\id)$
to the registrar.
The function symbol $\key(\cdot)$ is a free
private function symbol associating a secret key to each identity.
The registrar then blindly signs the committed vote
with his own signing key $k_R\in\Sigma_c\cap\Sigma_\priv$ and sends
this to the voter.
In the voting phase, voters anonymously output their committed
vote signed by the registrar and, on request, anonymously send
the opening for their committed vote.
The process corresponding to a voter session
(depending on some constants
$\id, v$) is depicted below, where $c\in\Ch_\pub$,
$M = \com(v,k)$, 
$ e = \bl(M,k')$ and
$s=\sign(e,\key(\id))$:\\[0.5mm]
\null\hfill
$
	\begin{array}[h]{r@{$\;$}c@{$\;$}lr}
    V(\id,v) &=&\phase{1}
                 \new k.\new k'.\Out(c, \langle \pk(\key(\id)); s\rangle). 
              \In(c, x). \\   
             && \If\, \versign(x,\pk(k_R))=e &\\
             && \Then\, \phase{2} \Out(c,\unbl(x,k')). 
             \In(c, y). \\   
             && \If\, y=\langle y_1 ; M\rangle \\
             && \Then\;\Out(c,\pair{y_1}{\pair{M}{k}}) &\\
  \end{array}
  $\hfill\null\\
A configuration corresponding to a voter $A$ ready to vote $v_1$
with an environment knowing the registrar's key is
$K_1=(\{V(A,v_1)\};\{w_R\refer k_R\};1)$.
It notably has an execution
$K_1\sint{\tr_h}(\emptyset;\phi;2)$, where:\\[0.0mm]
\null
\hfill
$
\begin{array}{rl}
\tr_h= & \tau.\tau.\Out(c,w_1).\In(c,R).\taut.\tau.\Phase{2}. \\
       & \Out(c,w_2).\In(c,\pair{C}{w_2}).\taut.\Out(c,w_3)  \\
\end{array}
$\hfill
\null\\[0.0mm]
and where $C$ is any constant in $\Sigma_c\cap\Sigma_\pub$,
$\phi=\{
w_R\refer k_R,
w_1\refer\pair{\pk(k_\id)}{s}, 
w_2\refer \sign(M,k_R), 
w_3\refer \langle n; M; k\rangle
\}$,
$s,M$ are as specified above and
$R=\sign(\versign(\projr(w_1), \projl(w_1)),w_R)$.
This corresponds to a normal, expected execution of one protocol
session.
\end{example}

\paragraph{\textbf{Discussion on Phases}}
\label{subsec:discu-phase}
Our notion of phases, also known as {\em stages} or {\em weak
phase}~\cite{vote-ifip,BlanchetAbadiFournetJLAP08},
faithfully model the notion of phases with deadlines in the context of e-voting protocols.
Once the deadline of a phase $i$ has passed (\ie the action $\mathtt{phase}(j)$ has been triggered for
$j>i$) then, no remaining actions from phase $i$ can be executed. It also can be modelled in ProVerif
(see~\cite{BlanchetAbadiFournetJLAP08,PVmanual,vote-ifip,vote-CSF08-maffei}).
Note that in the literature, phases are often modelled with {\em synchronisation barriers}~\cite{vote-ifip,vote-CSF16}
(also called {\em strong phases}). The latter are a much
stronger notion of phases that require all initial processes to reach
the next phase before the system can progress to the next phase (i.e.,
no processes can be dropped). In our view, synchronisation barriers
model phases in e-voting protocols less faithfully than our (weak)
phases, and come with limitations that we discuss in
\Cref{sec:frame:ballotsec}.
We note that {\em stages} can be combined with replication without restriction
while {\em strong phases} cannot be put under replication~\cite{PVmanual,vote-CSF16}.

\paragraph{\textbf{Trace equivalence}}
\label{subsec:trace-equiv}
\looseness=-1
Trace equivalence is commonly used~\cite{surveyJLAMP16}
 to express many privacy-type properties such as ballot secrecy.
 Intuitively, two configurations are trace equivalent if an attacker
 cannot tell whether he is interacting with one or the other.
 Such a definition is based on a notion of indistinguishability
 between  frames, called \emph{static equivalence}.
 Intuitively, two frames are statically equivalent,
 if there is no computation (nor equality test) that succeeds in
 one frame and fails in the other one.
Then, \emph{trace equivalence} is the active counterpart
taking into account the fact that the attacker may
interfere during the execution of the process in order to distinguish
between the two situations.
We define $\obs(\tr)$ to be the subsequence of $\tr$
obtained by erasing all the $\tau,\taut,\taue$ actions.
Intuitively, trace equivalence holds when any execution of one
configuration can be mimicked by an execution of the other configuration
having same observable actions and leading to statically equivalent frames.
We give a formal definition 
in \Cref{ap:model:eint}.

\begin{example}
\label{ex:eint}
Consider the frame $\phi$ from Example~\ref{ex:execution}.
The fact that the attacker cannot {\em statically} distinguish the
resulting frame from a frame obtained after the same execution
but starting with $V(A,v_2)$ instead of $V(A,v_1)$ is modelled by the following
static equivalence:
$\phi\estat^? \phi'$ where $\phi'=\phi \{v_1\refer v_2\}$
which in fact does not hold (see witness given in
Appendix~\ref{sec:app:comp}).
Consider $K_i=(\{V(A,v_i)\};\{w_R\refer k_R\};1)$ for $i\in\{1,2\}$.
We may be interested whether $K_1\eint^? K_2$.
This equivalence does not hold because there is only
one execution starting with $K_1$ (resp. $K_2$) following the trace $\obs(tr_h)$ (see Example~\ref{ex:execution})
and the resulting frame is $\phi$
(resp. $\phi'$).
But, as shown above, $\phi\not\estat \phi'$.
Therefore, $K_1\not\eint K_2$. However, ballot secrecy is not defined
by such an equivalence 
(see \Cref{sec:frame:ballotsec}) and we will see that
the FOO protocol actually satisfies it.
\end{example}

\paragraph{\textbf{Diff-Equivalence}}
Trace equivalence is hard to verify, in particular because of its
forall-exists structure: for any execution on one side, one has to find
a matching execution on the other side.
One approach is to consider
under-approximations of trace equivalence by over-approximating the 
attacker's capabilities.
\emph{Diff-equivalence} is such an under-approximation. It
was originally introduced to enable \proverif to analyse some form of
behavioural equivalence, and was later also implemented in \tamarin
and Maude-NPA.

Such a notion is defined on bi-processes, which are pairs of processes
with the same structure that only differ in the terms
they use.
The syntax is similar to above,
but each term $u$ has to be replaced by
a bi-term written $\choice{u_1}{u_2}$ (using \proverif syntax).
Given a bi-process $P$, the process $\fst(P)$ is obtained by replacing
all occurrences of $\choice{u_1}{u_2}$ with $u_1$; similarly with $\snd(P)$.
%
The semantics of bi-processes is defined as expected via a relation
that expresses when  and how a bi-configuration may evolve.
A bi-process reduces if, and only if,
both sides of the bi-process 
reduce in the same way triggering the same rule: \eg,
a conditional has to be evaluated in the same way on both sides.
The relation $\sint{\tr}_{\mathsf{bi}}$
on bi-processes is therefore defined as for processes.
Finally, diff-equivalence of a biprocess intuitively holds
when for any execution, the
resulting frames on both sides are statically equivalent and
resulting configurations  on both sides are able to perform the same kind of actions.
A formal definition is given in~\Cref{ap:model:diff}.

As expected, this notion of diff-equivalence is stronger than
trace equivalence.
It may be the case that the 
two sides of the bi-process reduce in different
ways (\eg, taking two different branches in a conditional) but still
produce the same observable actions.
Phrased differently: diff-equivalence gives the attacker the ability to
see not only
the observable actions, but also the processes' structures.
This strong notion of diff-equivalence is sufficient to establish 
some properties 
but is too strong to be useful for
establishing ballot secrecy off-the-shelf
(we discuss this at greater length in Section~\ref{sec:back:stateArt}).


\section{Framework}
\label{sec:class}

\looseness=-1
In this section we present our framework that we need to establish our
results, including definitions for
e-voting protocols and ballot secrecy.

\paragraph{\textbf{Preliminaries}}
We first define {\em symbolic traces} which are traces
whose recipes are symbolic; \ie, they are from $\T(\Sigma_\pub,\W\cup\xi)$,
where $\xi$ is a new set of second-order variables.
Intuitively, a symbolic recipe is a partial computation
containing unknown parts symbolised by second-order variables.
Symbolic traces represent attacker behaviours with non-fully specified
recipes.
A symbolic trace can be instantiated to a concrete trace by replacing
the second-order variables by recipes (\ie, in $\T(\Sigma_\pub,\W)$).
To an honest trace $\th$, we associate a distinguished instantiation
called the {\em idealised trace of $\th$} 
that can be obtained from $\th$ by replacing each variable $Y\in\xi$ by a fixed free, public
constant $C_Y$ that we add to $\Sigma_c\cap\Sigma_\pub$.

\begin{example}[Resuming Example~\ref{ex:execution}]
\label{ex:symbolic}
The recipe of the last input of $\tr_h$
$\pair{C}{w_2}$ 
        could be replaced
        by the symbolic recipe
$\pair{X}{w_2}$ with $X\in\xi$
(\ie, reflecting that the choice of $C$ is unimportant) resulting
in a symbolic trace $\th$.
  The idealised trace is $\th\{X\mapsto C_X\}$, where $C_X\in\Sigma_c\cap\Sigma_{\pub}$.
\end{example}

\subsection{Class of e-voting protocols}
\label{sec:class:proto}
\looseness=-1
We explain in this section how we model e-voting protocols and the
considered scenarios.  Essentially, we may consider an arbitrary
number of honest voters plus all necessary authorities (\eg, ballot box,
registrar, tally),
which can perform an unbounded number of sessions.
Depending on the threat model, we also consider an arbitrary number of dishonest voters.
We use {\em role} to refer to a specific role of the protocol, such as voter, authority, \etc
Together, the agents performing the roles are able to produce a public bulletin board of ballots from which the
tally computes the final result (\ie, multisets of accepted votes).

First, the protocol should specify a fixed finite set of possible votes
as a set of free, public constants
$\mathcal{V}$
(\eg, $\mathcal{V}=\{\mathtt{yes},\mathtt{no}\}$ for a referendum).
We also distinguish a specific free, public constant $\bot$
modelling the result of an invalid ballot.

\paragraph{\textbf{Roles}}
E-voting protocols specify a process for each honest role
(in particular, the voter role).
Dishonest roles can be left unspecified because they will be played by
the environment.
Those processes may use \eg phases, private data, private channels
but no replication nor parallel composition, as a role specifies how a {\em single} agent behaves during
{\em one} session.

\begin{definition}
\label{def:roles}
An honest role is specified by a process of the form
$\phase{i} \nu\vect{n}.A$, where $A$ is a process without parallel composition, replication
nor creation of names. 
There should be at least a process for the voter role and one for the ballot box role (noted $A_b$).
Moreover, for the specific case of voter role, the corresponding process
noted $V(\id,v)$ should be parameterized by $\id$ (modelling an identity) and $v$ (modelling
the vote this voter is willing to cast).
Finally, initial attacker's knowledge is specified through a frame $\phi_0$.
\end{definition}

\noindent
\looseness=-1
The process $A_b$
shall contain (at least) one output on the distinguished public channel $c_b\in\Ch_\pub$.
Intuitively, each session of the ballot box processes input data
and may output a ballot on channel $c_b$
(this may depend on private checks).
We eventually define the bulletin board itself as the set of messages
output on channel $c_b$.
W.l.o.g., we assume that role processes do not feature creation of names,
since one can always create the required names at the top level.

In threat models with dishonest
voters, honest voters
are played by the environment and we let $\roles_V=\emptyset$.
If the considered threat model does not consider dishonest voters, then 
the honest voters cannot be played by the environment.
For such threat models, we model  honest voters explicitly using the following set of processes:
$\roles_V=\{\nu\id.\; \nu\vect n\; V(\id,v)\ |\ v\in\mathcal{V}\}$,
where $\vect{n}$ are all the free names in $V(\id,v)$.
We write $\roles_o$ for the set of all processes of honest roles
except the voter role and let $\roles$ be the set
$\roles_V\cup\roles_o$.

\begin{example}
\label{ex:foo-role}
\looseness=-1
The process $V(v,\id)$ defined in Example~\ref{ex:execution}
is the voter role one could define for the FOO protocol.
We consider the ballot box as untrusted, and we
therefore model it by the process
$A_b=\phase{2}\In(u,x).\Out(c_b,x)$, where $u\in\Ch_\pub$. 
In contrast, we leave the registrar unspecified for the moment
because we consider it corrupted and thus played by the
environment.
We thus have $\roles_V=\emptyset$ and $\roles=\roles_o=\{A_b\}$.
Finally, the initial frame contains the registrar's key:
$\phi_0=\{w_0\refer k_R\}$.
\end{example}

\paragraph{\textbf{Bulletin Board \& Tally}}
We assume a public test $\Psi_b$ that everyone can execute on
the bulletin board to know if a ballot is well-formed or not.
Formally $\Psi_b[]$ is a term with a hole.
For instance, $\Psi_b$ can be a combination of a signature and ZK proof verification.
The protocol should also specify a term with hole
$\mathrm{Extract}[]$ that models the extraction of the vote from a
valid ballot.
As defined below, we require that this operator only computes votes or
$\bot$.
\begin{definition}
\label{def:bb}
The bulletin board and the tally are specified through
a public term $\Psi_b[]\in\T(\Sigma_\pub,[])$ and
a term $\Extract[]\in\T(\Sigma,[])$ such that:
for any message $t$, it holds that
$\Extract[t]\redc u$ for some $u\in\mathcal{V}\cup\{\bot\}$.

Given a trace $\tr$ and a frame $\phi$, we define respectively the bulletin board and
the tally's outcome:\\[0.5mm]
\begin{small}
  \null\hfill$
  \begin{array}[]{r@{$\;$}l@{$\;$}l@{}r@{}l}
    \BB(\tr,\phi)&=& \{w\phi\ & |\ & \exists\Out(c_b,w)\in\tr,\ \Psi_b[w\phi]\redc\}^\#\\
    \Res(\tr,\phi)&=& \{v\ & |\ & \exists \mathrm{ba}\in\BB(\tr,\phi), \Extract(\mathrm{ba})\redc v\in\mathcal{V}\}^\#.
  \end{array}
  $\hfill\null
\end{small}
\end{definition}

The bulletin board is the {multiset} of messages 
that pass the $\Psi_b$ condition and
channel $c_b$.
Then, the tally's outcome is the {multiset} of votes obtained by applying
$\Extract(\cdot)$ on the 
bulletin board.
\luccaN{While our notion of tally seems very restrictive, note that many operations can be performed by roles (\eg $A_b$) such as
mixnets as done \eg in~\cite{vote-CSF16} where the shuffling is done between two phases.}

\begin{example}[Continuing Example~\ref{ex:foo-role}]
\label{ex:foo-tally}
The public test $\Psi_b$ is defined as the following term with hole:\\[0.5mm]
\null\hfill
$
\begin{small}
  \begin{array}{rl}
    \Psi_b[]=&\AND(\versign(\projl(\projr([])),\pk(\sk_R)), \\
             &\ \ \ \open(\getMess(\projl(\projr([]))), \projr(\projr([]))))\\
  \end{array}
\end{small}
$\hfill\null\\[0.5mm]
where the destructor $\AND$ is such that $\AND(t_1,t_2){\redc}$ if and only if
$t_1{\redc}$ and $t_2{\redc}$ (formal definition in \Cref{sec:app:comp}).
Indeed, expected ballots are of the form
$\pair{X}{\pair{\sign(\com(k,v),k_R)}{k}}$.
The evaluation of $\Psi_b[b]$ may fail 
if
either the signature verification fails or the commit opening fails.
Finally, the extraction function is
$\Extract[]=
\mathsf{wrapVote}(\open(\getMess(\projr(\projl([]))),\projr(\projr([]))))$
where $\mathsf{wrapVote}(\cdot)$ corresponds to the identity function on
$\mathcal{V}$ and maps all values not in $\mathcal{V}$ (modulo $\theo$) to $\bot$.
\end{example}

\paragraph{\textbf{Honest Trace}}
As said before, no process is given for dishonest roles.
However, we require a notion of honest trace that itself
specifies what behaviour should be expected from dishonest roles.
\toRM{
Intuitively, it is an abstraction of the trace obtained by executing one session
of the voter role, possibly involving other roles in the honest, expected way.}

\begin{definition}
\label{def:honest-trace}
The protocol shall specify a symbolic trace $\th=\th^0.\Out(c_b,w_b)$
(\ie, the last action corresponds to the casting of a ballot) and
a distinguished execution, called the {\em honest execution}, of the form: 
$(\{V(\id,v)\}\uplus\roles_o; \phi_0; 1) 
\sint{\tr_h} (\p;\phi_h;k_f)$
for some $v\in\mathcal{V}$ and a free constant $\id$,
with $\tr_h$ the idealised trace associated to $\th$.
Additionally, we assume that $\th$ contains the action $\Phase{k}$ for all $2 \le k\le k_f$ (no phase is skipped).
\end{definition}
 The honest trace describes the honest expected execution of one voter completing the voting process until casting a ballot
 possibly through an interaction  with different roles. 
 Here, the notion  captures the fact that some corrupted roles are played by the attacker.
 Hence the fact that the honest trace is a symbolic trace with sub-messages that are unknown and not specified because chosen
 by the attacker.
 Note that the honest trace specifies how conditionals are expected to evaluate thanks to
 the $\taut\slash\taue$ dichotomy.

%

\begin{example}[Resuming Example~\ref{ex:foo-role}]
  \label{ex:foo-ht}
  We consider the following extension of the symbolic trace described in Example~\ref{ex:symbolic}, where $X\in\xi$ and
  $R_1=\sign(\versign(\projr(w_1),\projl(w_1)),w_R)$:\\[0.5mm]
    \null\hfill
  \begin{small}
$ \begin{array}{c}
      \th=\tau.\tau.\Out(c,w_1).\In(c,R_1).\taut.\tau.\Phase{2}.
           \Out(c,w_2).\\
           \phantom{\th=\ }\In(c,\pair{X}{w_2}).\taut.\Out(c,w_3). \tau.
           \In(u,w_3).\Out(c_b,w_3)
    \end{array}$
  \end{small}
    \hfill\null
\end{example}


\begin{definition}[E-voting Protocols]
\label{def:evoting}
  An e-voting protocol is given by a tuple
$
(\mathcal{V};
\phi_0;
V(\id,v);
\roles;
(\Psi_b[],\Extract[]);
\th)
$
where $\mathcal{V}$ are
the allowed votes (\ie free, public constants),
$V(\id,V)$ and $\roles$ are the processes modelling honest roles and $\phi_0$ is the attacker's initial knowledge
as in Definition~\ref{def:roles},
$\Psi_b[]$ and $\Extract[]$ model the bulletin board and the tally as in Definition~\ref{def:bb}, and
$\th$ describes the intended, honest execution as in Definition~\ref{def:honest-trace}.
\end{definition}

\paragraph{\textbf{Flexible threat models}}
\looseness=-1
Our generic definition of e-voting protocols allows to model many different threat models.
First, the processes that model roles may use different kinds of channels. For instance,
by using private channels for some inputs and outputs, we model communication channels that prevent
the attacker to eavesdrop on or tamper with those exchanged messages.
By using public channels and adding the identity of voter in exchanged data, we model an insecure,
non-anonymous communication channel. In contrast, by using only a single public channel, we model an anonymous
communication channel, since all voters will use the same channel.
Moreover, some roles can be considered dishonest or honest yielding different threat models.
Finally, different frames $\phi_0$ allow modelling different initial attacker knowledge
(\eg, secret keys of some roles).

\paragraph{\textbf{Annotated Processes}}
Finally, we equip the semantics with annotations that will help subsequent developments.
We assume a notion of annotations over processes so that we can keep track of 
{\em role sessions} and {\em specific voters} throughout executions. Each action can then be labelled 
by this information. For a voter process $V(\id,v)$, we note $[\id,v]$ the annotation given to actions produced by this process.
Formally we may define such annotations by giving explicit annotations to processes
in the initial multiset and modify the semantics so that it keeps annotations on processes as one could
expect. Those notations notably allow to define when a specific voter casts a ballot as shown next.
\begin{definition}
  Consider an e-voting protocol
  $(\mathcal{V}; 
  \phi_0; 
  V(\id,v);
  \roles; 
  (\Psi_b[],\Extract[]);
  \th)$.
  We say that a voter $V(\id,v)$ {\em casts a ballot} $w$ in an execution
  $(\p\uplus\{V(\id,v),!A_b\};\phi_0;1)\sint{\tr}K$ 
  when there exists an output $\Out(c,w_b)\in\tr$ annotated $[\id,v]$ 
  and a ballot box (\ie $A_b$) session $s_b$ such that actions from $\tr$ annotated $s_b$ are
  $\In(c,w_b'),\Out(c_b,w)$ such that $w_b\phi(K)\theo w_b'\phi(K)$.
  We say that $V(\id,v)$ {\em casts a valid ballot} $w$ when, in addition, $\Psi_b[w\phi(K)]\redc$\;.
\end{definition}

\newcommand{\Tally}{\mathrm{Tally}}

\subsection{Ballot Secrecy}
\label{sec:frame:ballotsec}
Next, we define the notion of ballot secrecy that we aim to analyse.
Intuitively, ballot secrecy holds when the attacker is not able to observe any difference
between two situations where voters are wiling to cast different votes.
However, we cannot achieve such a property by modifying
just one vote, since the attacker will always be able to observe the difference on the final
tally outcome. 
Example~\ref{ex:eint} illustrates this problem: one has that $K_1\not\eint K_2$ while the FOO protocol
actually ensures ballot secrecy.
Instead, we shall consider a {\em swap} of votes that preserves the tally's outcome as usually done~\cite{KremerRyan2005,DKR-jcs09}.
More formally, we are interested in comparing
$\mathcal{S} = (!\;\roles) \uplus \{V(A,v_0),V(B,v_1)\}$ and
$\mathcal{S}_r = (!\;\roles) \uplus \{V(A,v_1),V(B,v_0)\}$,
where $v_0,v_1$ are two distinct votes in $\V$ and $A,B$ are two
distinct free, public constants, and, 
$! \mathcal{Q}$ refers to $\{!P\ |\ P\in\mathcal{Q}\}^\#$
for a multiset of processes $\mathcal{Q}$.
Because the attacker should neither be able to distinguish $\S$ and $\S_r$
when having access to the tally's outcome,
we are actually interested in the {\em trace equivalence} between $\S\cup\{\Tally\}$
and $\S_r\cup\{\Tally\}$ where
the $\Tally$ is a process computing the e-voting protocol's outcome;
\eg
$\Tally = !\In(c_b,x).\Let\;z=\Psi_b[x]\;\In\;\Out(c,\Extract(x))$.
This is the most well-established definition of ballot secrecy in symbolic model
introduced in~\cite{KremerRyan2005}.


\looseness=-1
However, many e-voting protocols in our class would not satisfy such a property because the attacker
may force\footnote{This attack is captured by the model but is
unrealistic in practice. Indeed, in practical scenarios, to break the
ballot secrecy of a particular voter,
it would require the attacker to prevent all other voters from casting a
vote or, in case of dishonest tally, from performing their individual verifiability checks (as observed in~\cite{cortier2018voting}).}
 a particular voter (\eg $A$) to not cast any ballot in order to infer, from the tally's outcome,
the vote that the other voter (\eg $B$) has cast.
This is well-known and usually addressed by modelling phases as {\em synchronisation barriers}
as already acknowledged in~\cite{KremerRyan2005}:
``when we omit the synchronization [...]
privacy is violated.'' With such synchronisation barriers, all participants shall reach the same barrier in order
to move to the next phase preventing the previous scenario from
happening.
However, the use of barriers (as done \eg
in~\cite{KremerRyan2005,DKR-jcs09,vote-CSF16,dreier2017beyond,vote-ifip})
also limits the range of e-voting protocols one can model and analyse.
For instance, no synchronisation barrier can be put under a replication,
which forbids modelling authorities that act during several phases or
threat models with no dishonest voter.

\looseness=-1
In contrast, we choose to model e-voting phases as {\em weak phases} to avoid those limitations
and thus need an extra assumption
as a counterpart to synchronisation barriers.
We shall restrict our analysis to {\em fair executions}\footnote{This
should not be confused with
the {\em fairness property}~\cite{vote-ESO16,cortier2013attacking} that is one of the security property often required from e-voting protocols.}
where, at each beginning of phase, the voter $A$ and $B$ are still present and $A$ casts a ballot,
if and only if, $B$ does so. Note that all executions
of protocols modelled with synchronisation barriers are necessarily fair. We are thus conservative over
prior definitions.
Our fairness assumption can also be seen as an extension of the tally's assumption in~\cite{vote-ESO16}
that process the bulletin boards only if they contain both Alice and Bob's ballots.

\begin{definition}
\label{def:fairness}
\looseness=-1
Consider an e-voting protocol
  $(\mathcal{V};
  \phi_0;
  V(\id,v); 
  \roles;
  (\Psi_b[],\Extract[]);
  \th)$.
  An execution
  $(\p;\phi_0;1)\sint{\tr}K$ for $\p\in\{\S,\S_r\}$ is said to be
  {\em fair for voter $[\id,v]$}
  when at each beginning of phase $i$, there is a a process annotated $[\id,v]$ at phase $i$.
  Such an execution is said to be {\em fair} when, for some $v,v'\in\mathcal{V}$,
  (i) it is fair for $[A,v]$ and $[B,v']$ and
  (ii) $[A,v]$ casts a ballot if, and only if, so does $[B,v']$.
\end{definition}

Finally, we give below the definition of {\em ballot secrecy}. We could have defined it
as the trace equivalence (by symmetry) between $\S\cup\{\Tally\}$ and $\S_r\cup\{\Tally\}$
with a restriction over the explored traces (\ie the ones that are fair) but we prefer
our equivalent formulation in the interest of clarity. Note that the fairness assumptions get rid of strictly less
behaviours than the use of synchronisation barriers, and are therefore more precise from that point of view.

\begin{definition}[Ballot Secrecy]
\label{def:ballot-sec}
  An e-voting protocol
  $(\mathcal{V};
  \phi_0; 
  V(\id,v);
  \roles;
  (\Psi_b[],\Extract[]);
  \th)$ ensures {\em ballot secrecy} when 
  for any fair execution $(\S;\phi_0;1)\sint{\tr}K$,
  there exists a fair execution $(\S_r;\phi_0;1)\sint{\tr'}K_r$ such that:
  \begin{itemize}
  \item the attacker observes the same actions: $\obs(\tr) = \obs(\tr')$;
  \item the attacker observes the same data: $\phi(K) \estat \phi(K_r)$;
  \item the attacker observes the same tally outcome: 
    $\Res(\tr,\phi(K))= \Res(\tr',\phi(K_r))$.
  \end{itemize}
\end{definition}

\section{Conditions}
\label{sec:conditions}
We introduce three conditions and prove that
together, they imply ballot secrecy. In \Cref{sec:condi:tools} we
provide intuition for our approach and formally define the support notions.
We then define the conditions
(\ie,  {\em Dishonest, Honest Relations, Tally Condition}) in
\cref{sec:condi:honest,sec:condi:tally,sec:condi:dishonest}.
We state  in \Cref{sec:proofs} that our conditions are
sufficient.

\newcommand{\ba}{\mathrm{ba}}
\newcommand{\openBal}{\mathrm{OpenHonBal}}
\newcommand{\openAllBal}{\mathrm{OpenBal}}

\subsection{Protocol phases and their links}
\label{sec:condi:tools}

\paragraph{\textbf{Identity-leaking vs.~Vote-leaking Phases}}
In a nutshell, ballot secrecy boils down to the absence of link between an identity and the vote this identity
is willing to cast. However, as illustrated by the next example, the attacker is able to link different actions
performed by the same voter as long as they take part in the same phase.
Thus, each phase of the e-voting protocol must hide and protect either the identity of voters
or the votes voters are willing to cast.
It is thus natural to associate to each phase, a {\em leaking label}:
either the phase (possibly) leaks identity (we call such phases {\em id-leaking})
or it (possibly) leaks vote (we call such phases {\em vote-leaking}).
In order to ensure ballot secrecy, the {\em Honest Relations Condition},
which we define later,
will enforce that the attacker cannot establish meaningful links
(\ie links that would hold for $\S$ but not for $\S_r$)
between id-leaking phase outputs and vote-leaking phase outputs.

\begin{example}
\label{ex:leaking-phases}
Consider a voter's role process $V(\id,v)=
  \phase 1 \Out(a,\id). \Out(a,v)$
  (other components are unimportant here).
  This trivial protocol is an abstraction of a registration phase
  (voter sends its identity) followed by a voting phase (voter sends
  its vote).  We show this does not ensure ballot secrecy (see also the full witness
  in \Cref{sec:app:conditions}).
  Consider the (fair) execution starting with $(\S;\emptyset;1)$
  and producing the trace $\tr=\Out(a,w_\id).\Out(a,w_v).\Out(a,w_\id').
  \Out(a,w_v')$
  whose the two first (resp. two last) actions are performed by the voter
  $A$ (resp. $B$).
  This execution has no indistinguishable counterpart in $\S_r$.
  Indeed, because the first message {\em reveals the identity} of the voter, 
  the attacker can test that the first output is performed by $A$.
  After the first output $A$, the $\S_r$ side can only output either $B$ or $v_1$
  but not $v_0$.
  However, because the second message {\em reveals the vote}, the attacker
  can make sure the output vote is $v_0$ and not $v_1$.
  Thus, this  protocol does not ensure ballot secrecy because 
  in a single phase (\ie, phase 1), there is one output revealing the identity
  of the voter and one output revealing the voter's vote.
%
  However, the process $V(\id,v)=\phase 1 \Out(a,\id).\ \phase 2 \Out(a,v)$ ensures
  ballot secrecy and does not suffer from the above problem.  
  The attacker cannot force $A$ to execute its first message leaking
  identity and then immediately its second message leaking its vote,
  because doing so would {\em kill} the process $V(B,v_1)$ (which is still in phase 1)
  preventing the whole execution from being fair.
  Thus, the attacker has to trigger all
  possible first-phase actions of $A$ and $B$ before moving to
  the second phase.
  After the first phase, we end up with
  the processes $\{\Out(a,v_0), \Out(a,v_1)\}$ on the $\S$ side and
  $\{\Out(a,v_1),\Out(a,v_0)\}$ on the $\S_r$ side, which are indistinguishable.
  
  Thus,
  in this first iteration, we split
  outputs revealing identity and outputs revealing votes in distinct
  phases. This enables breaking links between identity and vote.
\end{example}


As we will later see, our approach requires that we associate a
\emph{leaking label} to each phase which is a binary label indicating
whether we consider the phase to be {\em vote-leaking} or {\em id-leaking}.
Our only goal is not find such a labelling for which our conditions hold, implying
ballot secrecy. 
In practice and on a case-by-case basis, we can 
immediately associate the appropriate leaking label to a phase.
However, we explain in \Cref{sec:case:verif} how those labels can be automatically guessed.


\begin{example}[Continuing Example~\ref{ex:foo-ht}]
  \label{ex:foo-phase}
  We consider phase 1 (resp. phase 2) as id-leaking (resp. vote-leaking).
\end{example}

\paragraph{\textbf{Id-leaking vs.~Vote-leaking Names}}
As illustrated by the next example, a name presents in different outputs can also be exploited 
 to link those outputs. This is problematic when phases of
 those outputs have different leaking labels
since it would enable linking those phases and thus maybe an identity with a vote.
That is why, similarly to phases, we associate a {\em leaking label} to each name created by role processes.
Note that the phase in which the name is created is irrelevant. What really matters is where the name
is used and to what kind of data it can be linked.
Again this classification is easily done on a case-by-case basis in practice
but we present simple heuristics to automatically infer it
in \Cref{sec:case:verif}.


\begin{example}
\label{ex:leaking-names}
  We continue Example~\ref{ex:leaking-phases} and consider
  $V(\id,v) =
  \phase 1 \new r.\Out(a, \id\oplus r).\phase 2 \Out(a, v\oplus r)$
  where $\oplus$ denotes the exclusive or operator.
  This new protocol seems similar to the previous
  iteration. However, it does not satisfy ballot secrecy.
  Now, the attacker can use the
  name $r$ to link the action of the {\em id-leaking} phase
  with the action of the {\em vote-leaking} phase (see witness in Example~\ref{ex:xorEx:honestRel}),
  defeating the role
  of the phase which previously broke this link. 
  Note that only names can lead to the this issue: all other kinds of data
  (\eg, public constants) are uniform and do not depend on a
  specific voter session (\eg, replace $r$ by a constant $\ok\in\Sigma_c$ and the
  resulting protocol ensures ballot secrecy).
\end{example}

\begin{example}[Continuing Example~\ref{ex:foo-phase}]
  \label{ex:foo-phase-name}
  We consider the names $k,k'$ (created by the voter as shown in \Cref{ex:execution})
  to be vote-leaking. 
\lumN{removed '$k$' id-leaking would work as well}
\end{example}

\paragraph{\textbf{``Divide \& Conquer
''}}
One reason that ballot secrecy is hard to verify using existing
techniques, is the fact that diff-equivalence
is too rigid w.r.t.~phases: it does not allow any flexibility at
the beginning of phases.
We should be able to stop there, and start again with a new pairing
left-right, a new biprocess\footnote{We would like to achieve this even
for roles that can perform an unbounded number of sessions.}.
A core ingredient of our technique is to split each role
into independent, standalone sub-roles (each sub-role playing one phase of the initial role), which allows us 
to consider many more pairings including ones (left-right)
that are not consistent over phases.
\luccaN{One of our conditions will require that 
the attacker cannot distinguish the voter and other roles processes from standalone sub-role processes that do not
need to know the execution of past phases. This is also important to ensure
ballot secrecy, because otherwise 
the attacker might link two actions coming from two different
phases and then learn that they came from the same voter.}

\looseness=-1
We now formally define the sub-roles.
Let $\nV{i}$ be the vector made of the constant $v_i$ and all vote-leaking names (with indices $i$).
Let $\nID{i}$ be the vector made of the constant $\id_i$ and all id-leaking names (with indices $i$).
The pair $(\nV{i},\nID{j})$ (deterministically) describes the initial data needed to start one full honest interaction of one voter with
all necessary role sessions.

\begin{definition}
\label{def:sub-parts}
Recall that the voter process is of the form
$V(\id,v)=\phase k \nu\vect{n}.V'(\id,v)$
where $V'$ is without creation of names.
We define $V(\nID{i},\nV{j})$ as the process $\phase k V'(\id_i,v_j)\sigma$
where $\sigma$ maps names in $\vect{n}$ to corresponding names in $\nV{i}\cup\nID{j}$.
We similarly define $A(\nID{i},\nV{j})$ for $A\in\roles_o$.
Finally, we define $\roles_o(\nID{i},\nV{j})=\{A(\nID{i},\nV{j})\ |\ A\in\roles_o\}^\#$.
\end{definition}

\begin{example}[Resuming Example~\ref{ex:leaking-names}]
  \label{ex:leaking-names-two}
  Assuming $r$ is said to be vote-leaking, one has $\nID{i}=\id_i$ and
  $\nV{j}=v_j,r_j$, and,
  $V(\nID{i},\nV{j})= \phase 1 \Out(a, \id_i\oplus r_j).\phase 2 \Out(a, v_j\oplus r_j)$
\end{example}

\looseness=-1
Intuitively, the process $V(\nID{i},\nV{j})$ corresponds to the voter role process of identity $\id_i$ and vote
$v_i$ that will use all given names instead of creating fresh ones.
Similarly for authorities.
Note that in the vectors $\nID{i},\nV{j}$,
there may be names that are never used in some roles;
we still give the full vectors as arguments though.
%
%
We remark that given names $\nV{i},\nID{j}$, there is a unique (modulo $\theo$) execution of
$(\{V(\nID{i},\nV{j})\}\uplus\roles_o(\nID{i},\nV{j});\phi_0;1)$
following the idealised trace that is
(up to some $\tau$-actions) the bijective renaming of the honest execution (see Definition~\ref{def:honest-trace})
from names used in the honest execution to names in $\nV{i},\nID{j}$.
We call that execution the {\em idealised execution for $\nV{i},\nID{j}$}.

\begin{definition}[Phase Roles]
\label{def:phaseRoles}
Given $\nV{i},\nID{j}$, consider the unique idealised execution for $\nV{i},\nID{j}$:
$(\{V(\nID{i},\nV{j})\}\uplus\roles_o(\nID{i},\nV{j});\phi_0;1) \sint{\tr^h}
(\emptyset;\phi;n).$
For some $A(\nID{i},\nV{j})\in\roles(\nID{i},\nV{j})$ and some phase number $k\in[1;n]$,
we note $A^k_f(\nID{i},\nV{j})$ the first process resulting from $A(\nID{i},\nV{j})$ of the form
$\phase k P$ for some $P$ if it exists; and $0$ otherwise.
Finally, the {\em phase role} of $A$ for $k$, noted $A^k(\nID{i},\nV{j})$, is the process one obtains from
$A^k_f(\nID{i},\nV{j})$ by replacing by $0$ each sub-process of the form $\phase{l} P'$ for some $P'$ and $l> k$.
Further, the process $A^{\forall}(\nID{i},\nV{j})$ is the sequential composition of all $A^i(\nID{i},\nV{j})$.
Finally, $V^k(\nID{i},\nV{j})$ and $V^{\forall}(\nID{i},\nV{j})$ 
are defined similarly.
\end{definition}


\looseness=-1
In a nutshell, phase roles describe how roles behave in each phase, assuming that
previous phases followed the idealised executions for the given names.
A crucial property we eventually deduce from our conditions 
is that
it is sufficient (\ie, we do not lose behaviours and hence neither attacks)
to analyse the phase roles in parallel
instead of the whole e-voting system.
Note that, by doing so, we do not only put processes in parallel that were
in sequence, we also make them forget the execution of past phases
cutting out some potential links that rely on that aspect.
Indeed, the voter process in a phase $i$ may use data received in a phase $j<i$ creating links between those two
(\eg via malicious tainted data).
When put in parallel, all parts are standalone processes that are no longer linked by past execution.
Note also that we put standalone processes in parallel that behave as if previous phases followed
one specific instantiation of the honest trace (\ie, the idealised trace) thus reducing a lot possible behaviours.
This will be crucial for defining {\em Honest Relations Condition} via
biprocesses that could not be defined 
otherwise.

\begin{definition}
The {\em id-leaking sub-roles} (respectively {\em vote-leaking sub-roles})
are as follow:\\[0.5mm]
\null\hfill$
  \begin{array}[]{l}
  \roles^\id(\nID{i},\nV{j}) =\{A^k(\nID{i},\nV{j})\ |\  A\in\roles_o\cup\{V\}, k\text{ id-leak.}\} \\
  \roles^v(\nID{i},\nV{j}) =\{A^k(\nID{i},\nV{j})\ |\ A\in\roles_o\cup\{V\}, k\text{ vote-leak.}\} \\
  \end{array}
$\hfill\null
\end{definition}


\begin{example}[Continuing Example~\ref{ex:foo-phase-name}]
\label{ex:foo-divide}  
The {\em phase roles} are:\\[0.5mm]
\null\hfill$
  \begin{array}[h]{rclr}
    V^1(\nID{\id},\nV{i}) &=& \phase 1
                M = \com(v_i,k_i),&\\
             && e = \bl(M,k'_i),\ s=\sign(e,\key(\id)), &\\
             && \Out(c, \langle \pk(\key(\id)); s\rangle).\ \In(c, x); &\\
             && \If\, \versign(x,\pk(k_R))=e\; \Then\, 0&\\[0.7mm]
    V^2(\nID{\id},\nV{i}) &=&
                \phase 2 
                M = \com(v_i,k_i),&\\
             && \Out(c,\sign(M,k_R)).\ \In(c, y).  &\\
             && \If\, y=\langle y_1 ; M\rangle \;
              \Then\, \Out(c,\pair{y_1}{\pair{M}{k_i}} &\\
  \end{array}$\hfill\null\\[0.5mm]
The sub-roles are
$\roles^\id=\{V^1\}$, $\roles^v=\{V^2,A_b\}$.
\end{example}




\paragraph{\textbf{Honest Interactions}}
We will show that under our conditions,
$\S$ is indistinguishable from an e-voting system based on
the reunion of id-leaking and vote-leaking sub-roles. 
To achieve this property we eventually require that when a voter
reaches a phase $k$ then it must be the case that it had an honest
interaction so far. This notion of honest interaction is captured
by the honest trace $\th$ as formally defined next.

\looseness=-1
For two traces $\tr_1,\tr_2$ and a frame $\phi$,
we note $\tr_1 \equiv_\phi \tr_2$ if $\tr_1$ and $\tr_2$ are equal up to recipes
and for all recipes $M_1$ of $\tr_1$
we have that $M_1\phi\redc{\theo}M_2\phi\redc$,
$M_2$ being the corresponding recipe in $\tr_2$.
For some $1 < j \le k_f$,
we say that a trace $\tr_1$ and a frame $\phi$ {\em follow a trace $\tr_2$ up to phase $j$} (resp. {\em follow $\tr_2$})
if $\tr_1\equiv_\phi (\tr_2'\rho)$ where
$\tr_2'$ is such that $\tr_2=\tr_2'.\mathtt{phase}(j).\tr_2''$ for some $\tr_2''$
(resp. $\tr_2'=\tr_2$)
and $\rho$ is some bijection of handles (so that the notion is insensitive to choices of handles).
The above ensures that $\tr_1$ and $\tr_2$ compute messages
having the same relations w.r.t.~outputs (handles).
For instance, if $\tr=\Out(c,w).\In(c,w)$ is some trace, we would like
to capture the fact that for a frame $\phi$,
any trace $\Out(c,w).\In(c,M)$ follows $\tr$
as long as $M$ computes the same message as $w$ (\ie $w\phi\theo M\phi\redc$).
Finally, given an execution,
we say that a voter {\em had an honest interaction up to phase $j$} (resp. {\em had an honest interaction})
when there exist role session annotations $s_1,\ldots, s_n$ such that
the sub-trace made of all actions labelled by this voter or $s_i$ with the resulting frame
follow an instance of the honest trace up to phase $j$ (resp. follow an instance of the honest trace).

Recall that the trace $\tr^h$ from the honest execution is an instance of the honest trace $\th$ and are equal when $\th$ has no
unknown part (\ie second-order variable).
The purpose of the idealised trace in subsequent developments
is to consider an arbitrary, fixed, instantiation that is uniform for all voters and sessions.



\subsection{Honest Relations Condition}
\label{sec:condi:honest}
This condition aims at ensuring the absence of id-vote relations in honest executions.
Let $\nID{A}$ (resp. $\nID{B}$) be the public identity $\id_A$ (resp. $\id_B$) and as many as necessary (depending on the protocol) fresh names.
We may use $A$ (resp. $B$) to refer to $\id_A$ (resp. $\id_B$).
Let $\nV{0}$ (resp. $\nV{1}$) be the public vote $v_0$ (resp. $v_1$)
and fresh names. We require these names to be pairwise distinct.
We define the biprocess $\biproc$ at the core of the
{\em Honest Relations Condition}:\\[1mm]
\null\hfill$
\begin{small}
  \begin{array}{rl}
    \biproc = 
    &(\{
      \roles^{\id}(\nID{A},\choice{\nV{0}}{\nV{1}}),
      \roles^v(\choice{\nID{A}}{\nID{B}},\nV{0}),\\
    &\  \ \ \roles^{\id}(\nID{B},\choice{\nV{1}}{\nV{0}}),
      \roles^v(\choice{\nID{B}}{\nID{A}},\nV{1}) \}\\
    &\ \ \ \biguplus\ !\roles; \phi_0;1)\\
  \end{array}
\end{small}
$\hfill\null\\[1mm]
The biprocess $\biproc$ represents a system where votes (and vote-leaking names)
are swapped in
id-leaking phase and identities (and id-leaking names)
are swapped in vote-leaking phase.
The attacker should not be able to observe any difference in the absence of
relation between identity plus id-leaking names and vote plus
vote-leaking names.

  Note that the swaps are inconsistent across phases
  (\ie we do not swap same things in all phases).
  We could not have defined such non-uniform swaps by relying on the roles'
  processes. Instead, this has been made possible thanks to our divide \& conquer approach.

\begin{example}[Resuming Example~\ref{ex:leaking-names-two}]
\label{ex:xorEx:honestRel}
  One has
$\biproc = (
\{
  \phase 1 \Out(a, \id_A\oplus\choice{r_0}{r_1}),
  \phase 2 \Out(a, v_0\oplus r_0), 
  A_b,\linebreak[4]
  \phase 1 \Out(a, \id_B\oplus \choice{r_1}{r_0}), 
  \phase 2 \Out(a, v_1\oplus r_1),
  A_b,
  !A_b
\};\emptyset;1)$.
We argue that this biprocess is not diff-equivalent. Indeed, 
the attacker can xor $\id_A,v_0$, an output of phase 1, and an output of phase 2. For one
choice of the outputs, the attacker may obtain $0$ on the left. This cannot happen on the right.
The same interaction is also an attack trace for ballot secrecy.
\end{example}

One requirement of the {\em Honest Relations Condition}
is the diff-equivalence of $\biproc$.
However, this alone does not prevent the honest trace to make
explicit links between outputs of id-leaking phases and inputs of
vote-leaking phase (or the converse). This happens
when the honest trace is not {\em phase-oblivious} as defined next.
\begin{definition}
  The honest trace is {\em phase-oblivious} when:
  \begin{itemize}
  \item in all input $\In(c,M)$ of $\th$ in a phase $i$, handles in
    $M$ must not come from phases with a different leaking labels (\ie,
    vote or id-leaking) than that of phase $i$, and
  \item a variable $X\in\xi$ of $\th$ must not occur in two phases having
    different leaking labels.
  \end{itemize}
\end{definition}
%

\begin{condi}[Honest Relations]
  The Honest Relations Condition is satisfied if $\biproc$ is diff-equivalent
  and the honest trace is phase-oblivious.
\end{condi}

\subsection{Tally Condition}
\label{sec:condi:tally}
The {\em Tally Condition} prevents ballot secrecy attacks that exploit the tally's outcome.
Intuitively, the Condition requires that for any valid
ballot produced by $\S$,
either
(i) the ballot stems from an honest execution
of $A$ or $B$, or
(ii) it is a dishonest ballot and in that case, it must be that the vote the Tally would extract from that ballot
is the same before or after the swap $A\leftrightarrow B$. Formally, we deal with the case (ii) by considering
executions of $\biproc$ so that we can always compare ballots before or after the swap $A\leftrightarrow B$
(\ie intuitively in $\S$ or in $\S_r$).

\begin{condi}[Tally]
\label{condi:tally}
We assume that $\biproc$ is diff-equivalent.
The {\em Tally Condition} holds if for any execution
$\biproc\sint{\tr}\biproc'$ leading to two frames $\phi_l,\phi_r$
such that the corresponding execution on the left is fair, it holds that
for any ballot $\ba\in\BB(\tr,\phi_l)$ (with $w$ as the handle) then either:
\begin{enumerate}
\item there exists a voter $V(\id,v)$ which had an honest interaction and cast a valid ballot $w$
  (it stems from an honest voter);
\item or there exists some $v\in\V\cup\{\bot\}$ such that
  $\Extract(w\phi_l)\redc v$ and $\Extract(w\phi_r)\redc v$
  (it may correspond to a dishonest ballot that should not depend on A's or B's vote).
\end{enumerate}
\end{condi}

  The Tally Condition does not forbid making
  copies of a ballot completely ``blindly''
  (\ie, without being able to link this ballot to a specific
  voter\slash identity).
  Indeed, votes in vote-leaking phases are identical
  on both sides of $\biproc$ and the second case (2) will
  thus trivially hold.
  This actually improves the precision of the condition since
  such copies are not harmful \wrt
  ballot secrecy.  In fact,
	\toRM{in such a case,} the attacker may observe a bias that he might exploit
  to learn the vote contained in a specific ballot, but the attacker would be
  unable to link this ballot (and its vote) to a specific voter.
Therefore, our condition captures a refined notion of
{\em ballot independence attacks}~\cite{cortier2013attacking}.


\subsection{Dishonest Condition}
\label{sec:condi:dishonest}
\looseness=-1
This condition prevents attacks based on actively dishonest interactions where
the attacker deliberately deviates from the honest trace in order to exploit possibly
more links (\eg see tainted data example from Introduction).
The idea of the condition is to be able to reduce the behaviours of the voter
system to the parallel composition of all {\em phase roles} that are based on
{\em the idealised execution} for some names chosen non-deterministically.
The condition requires that if a voter process moves to the next phase in an execution of the e-voting system
then it must be the case
that it had an honest interaction up to that phase and all agents involved in that honest
interaction are not involved in others.
When $\th=\tr^h$ (no unknown part in the honest execution), this is sufficient to show that
roles are indistinguishable from the parallel composition of phase roles. However, when
$\th\neq\tr^h$, some attacker choices for second second-order variables of $\th$ may break the latter.
For that case, the condition thus requires an additional diff-equivalence between the system based on roles
and the system based on the sequential composition of the phase roles (\ie processes $A^{\forall}$).
To make sure that the tally's outcome
could not break this equivalence, we test the former in presence of an oracle
opening {\em all} ballots.

\looseness=-1
Formally, we let $V^D(\nID{\id},\nV{i})$ be the biprocess obtained by the (straightforward) merge of the two following processes:
(1) $V(\nID{\id},\nV{i})$ and (2) $V^\forall(\nID{\id},\nV{i})$ (\ie see Definition~\ref{def:phaseRoles}).
Recall that the process $V^\forall$ forgets the past execution at the beginning
of each phase 
and behaves as if the past execution followed the idealised trace.
In particular, it forgets previous (possibly malicious) input messages.
We similarly define biprocesses $A^D$ for $A\in\roles_o$.
Given an identity $\id$ and a vote $v$, we define a process:\\[1mm]
\null\hfill
$S_f(\id,v)=\new \nID{0}.\new \nV{0}.(\Pi_{A\in\roles_o\cup\{V\}}
A^D((\id,\nID{0}),(v,\nV{0})))$
\hfill\null\\[1mm]
where $\Pi$ denotes a parallel composition and
$\nID{0}$ (resp. $\nV{0}$) is made of all id-leaking names except the identity
(resp. vote).
Intuitively, $S_f$ starts by creating all necessary names and is then ready
to complete one voter session (according to processes $V$ and $A\in\roles$ on the left and $V^{\forall},A^\forall$ on the right)
using those names. Next, the oracle opening all valid ballots is as follows:
$\openAllBal=\phase {k_f} \In(c_u,x).\Let\, z=\Psi[x]\,\In\,\Let\,v=\Extract[x]\,\In\,\Out(c_u,v))$
where $c_u$ is some public channel and
$k_f$ is the last phase that occurs in the honest trace $\th$.
Finally, we define:
$\biproc^D=(\{S_f(A,v_0),S_f(B,v_1),!\openAllBal\}\;\cup\;!\,\roles;\phi_0)$.

\begin{example}[Continuing Example~\ref{ex:foo-divide}]
\label{ex:foo-cf}
The process $V^D$ associated to the FOO protocol is shown below:\\[0mm]
\null\hfill$
  \begin{array}[h]{lr}
    \phase{1} M = \com(v_i,k_i),&\\
              e = \bl(M,k'_i),\ s=\sign(e_i,\key(\id)), &\\
              \Out(c, \langle \pk(\key(\id)); s\rangle).\ 
              \In(c, x). \\   
              \If\, \versign(x,\pk(k_R))=e &\\
              \Then\, \phase{2} \Out(a,\choice{\unbl(x,k'_i)}{\sign(M,k_R)}). &\\
              \In(c, y). \   
              \If\, y=\langle y_1 ; M\rangle 
 \;             \Then\, \Out(c,\pair{y_1}{\pair{M}{k_i}}) &\\
  \end{array}
  $\hfill\null
\end{example}

\begin{condi}[Dishonest]
The {\em Dishonest Condition} holds when:
\begin{enumerate}
\item 
  For any fair execution $(\S;\phi_0;1)\sint{\tr.\mathtt{phase}(j)}(\p;\phi;j)$, if a process at phase $j$
  annotated $[\id,v]$ for
  $\id\in\{A,B\}$ and $v\in\mathcal{V}$ is present in $\p$ then it had an honest interaction in $\tr,\phi$ up to phase $j$.
  Moreover, authority sessions $a_i$ involved in this honest interaction
  are not involved in other honest interactions.
\item If $\th$ has some unknown part (\ie $\th\neq\tr^h$), then
  the biprocess $\biproc^D$ is diff-equivalent.
\end{enumerate}
\end{condi}

\subsection{Main Theorem}
\label{sec:proofs}
Our main theorem states that our three conditions together imply ballot secrecy.
It is based on the following Lemma that states  the essential property we deduce from the Dishonest Condition.
Note that the definition of having honest interactions is straightforwardly extended to executions
performed by phase sub-roles. For instance, $V^1(\vect{n^{\id}_{\id_A}},\vect{n}^v_{i})$ would be annotated $[\id_A,v_i]$.
We give full proofs of the lemma and our Main Theorem
in Appendix~\ref{sec:ap:proofs-thm}.

\begin{restatable}{lemma}{dishoLemma}
\label{lem:cf}
Let $v_i,v_j$ be some distinct votes in $\V$ and $\tr$ a trace
of the form
$\tr_0.\mathtt{phase}(k).\tr_1$ for some $1\le k\le k_f$
where no $\mathtt{phase}(\cdot)$ action occurs in
$\tr_1$.
If the dishonest condition holds then
there exists a fair execution
$$(\{V(\id_A,v_i),V(\id_B,v_j)\}\cup\,!\roles;\phi_0;1)\sint{\tr}(\p;\phi;k),$$
if, and only if,
there exist pairwise distinct names $\nID{A},\nID{B},\nV{i},\nV{j}$ (not including vote or identity),
a trace $\tr'=\tr_0'.\mathtt{phase}(k).\tr_1'$ and a fair execution
$$
\begin{small}
  \begin{array}{ll}
    (\{
    \roles^\id((\id_A,\nID{A}),(v_i,\nV{i})),
    \roles^v((\id_A,\nID{A}),(v_i,\nV{i})),&\\
    \ \ \>\roles^\id((\id_B,\nID{B}),(v_j,\nV{j})),
    \roles^v((\id_B,\nID{B}),(v_j,\nV{j}))
    \}\uplus\;!\roles;
    \phi_0;1)\\
    \sint{\tr'}
    (\q;\psi;k).
  \end{array}
\end{small}
$$
where  $[\id_A,v_i]$ and $[\id_B,v_j]$ had an honest interaction in $\tr_0'.\mathtt{phase}(k)$ up to phase $k$.
$\psi$. 
In both directions, we additionally have that $\obs(\tr')=\obs(\tr)$, $\phi\estat\psi$
and $\Res(\tr,\phi)=\Res(\tr',\psi)$.
\end{restatable}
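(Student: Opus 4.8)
The plan is to prove Lemma~\ref{lem:cf} as a biconditional by establishing that, under the Dishonest Condition, the whole e-voting system and the parallelised phase-role system are trace equivalent in a \emph{structured} way: each execution of one can be matched step-by-step by an execution of the other that (i) produces the same observable actions, (ii) yields statically equivalent frames, and (iii) preserves the tally outcome. I would structure the argument around the two sub-conditions of the Dishonest Condition, treating them as the two ``halves'' of the equivalence: the reachability part (item 1) provides the existence of honest interactions and the disjointness of authority sessions, while the diff-equivalence part (item 2) handles the case where the honest trace $\th$ has non-trivial second-order variables.

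First I would set up a \emph{simulation relation} between configurations of the two systems, indexed by the phase boundary $\mathtt{phase}(k)$. For the forward direction, given a fair execution of $(\{V(\id_A,v_i),V(\id_B,v_j)\}\cup\,!\roles;\phi_0;1)\sint{\tr}(\p;\phi;k)$, I would use item~1 of the Dishonest Condition to conclude that $[\id_A,v_i]$ and $[\id_B,v_j]$ each had an honest interaction up to phase $k$, and that the authority sessions involved are pairwise disjoint. This disjointness is the key structural fact: it lets me \emph{partition} the actions of $\tr$ into the bundle belonging to $A$'s interaction, the bundle belonging to $B$'s interaction, and the remaining replicated-role actions. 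Each bundle, by the definition of ``honest interaction up to phase $k$'' via $\th$, follows an instance of the idealised trace, so I can extract fresh names $\nID{A},\nID{B},\nV{i},\nV{j}$ and replay the same bundle using the phase roles $\roles^\id,\roles^v$ instantiated with those names. The phase roles are defined (Definition~\ref{def:phaseRoles}) precisely so that they reproduce the per-phase behaviour while forgetting past inputs; I would verify that replaying the bundles in parallel yields $\tr'$ with $\obs(\tr')=\obs(\tr)$ and a frame $\psi$ statically equivalent to $\phi$.

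The subtle point, and where item~2 of the Dishonest Condition enters, is that when $\th\neq\tr^h$ the attacker's choices for the second-order variables may create cross-phase dependencies that the idealised (constant-instantiated) phase roles cannot reproduce verbatim. Here I would invoke the diff-equivalence of $\biproc^D$: since $\biproc^D$ merges each role $A$ with its sequentialised phase-role version $A^\forall$ (under the all-ballot-opening oracle $\openAllBal$), its diff-equivalence certifies that replacing the real role behaviour by the phase-role behaviour is unobservable to the attacker, \emph{including} through the tally outcome. I would lift the diff-equivalence of $\biproc^D$ (a statement about a single biprocess, hence closed under the bi-execution semantics $\sint{\tr}_{\mathsf{bi}}$) to the required trace equivalence between the two full systems, using that diff-equivalence is an under-approximation of trace equivalence and that $\fst(\biproc^D)$ is the role system while $\snd(\biproc^D)$ is the phase-role system. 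The reverse direction is largely symmetric: an execution of the phase-role system can be lifted back to an execution of the full system by re-sequencing the per-phase bundles along the phases, again justified by the same honest-interaction structure and, where needed, by $\biproc^D$.

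The main obstacle I anticipate is the bookkeeping that ties the annotation-based notion of ``honest interaction up to phase $k$'' to the name-extraction needed to instantiate the phase roles: I must show that the names $\nID{A},\nID{B},\nV{i},\nV{j}$ can be chosen pairwise distinct and consistently across all phases, that the $\rho$-renaming of handles in the definition of $\equiv_\phi$ matches up the outputs of the two executions, and that fairness is preserved in both directions (the existence of a process annotated $[\id,v]$ at the start of each phase must transfer, which again relies on the honest-interaction-up-to-phase-$k$ guarantee). Reconciling the parallel structure of the phase-role system with the sequential, phase-gated structure of the original roles --- while ensuring the tally multisets $\Res(\tr,\phi)$ and $\Res(\tr',\psi)$ coincide --- is where the bulk of the careful case analysis will live, and I expect the $\th\neq\tr^h$ case handled by $\biproc^D$ to require the most delicate argument.
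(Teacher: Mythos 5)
Your proposal is correct and follows essentially the same route as the paper's own proof: item~1 of the Dishonest Condition yields the disjoint honest interactions from which the names $\nID{A},\nID{B},\nV{i},\nV{j}$ are extracted and the execution replayed by phase roles, the case $\th\neq\tr^h$ is discharged by the diff-equivalence of $\biproc^D$ (with the sequential-to-parallel step justified by the phase gating), and the tally preservation rests on the ballot-opening oracle $\openAllBal$ --- exactly the paper's argument, which makes the oracle step explicit as a contradiction via static equivalence of the output constants. The only cosmetic slip is your remark that each honest bundle ``follows an instance of the idealised trace'' (it follows an instance of the honest trace $\th$; coinciding with the idealised one only when $\th=\tr^h$), but your subsequent handling of the $\th\neq\tr^h$ case shows you treat this correctly.
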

\toRM{
\begin{proof}[Proof sketch, full proof in Appendix~\ref{sec:ap:proofs-thm}]
We prove the direction $(\Rightarrow)$, the other one can be proven similarly.
The first item of the Dishonest condition allows us to define names
$\nID{A},\nID{B},\nV{0},\nV{1}$ pertaining to the two disjoint honest interactions.
By moving some $\tau$-actions backwards in the given execution, we explore the same
execution (up to $\tau$-actions) starting with the left-side of $\biproc^D$.
If $\th=\tr^h$ then there is only one instantiation of the honest trace.
Therefore, voters $A$ and $B$ followed the idealised trace and thus
their executions can be exactly mimicked by executions
of processes $A^\forall$ for $A\in\roles_o\cup\{V\}$ with appropriate names.
Otherwise (\ie $\tr^h\neq\th$), the diff-equivalence of $\biproc^D$
implies that one can replace $V$ (resp. role process $A$) by
$V^\forall$ (resp. $A^\forall$) in the latter execution whilst preserving the executability
of the same observable actions (\ie $\obs(\tr')$), leading to statically equivalent frames.
Due to the constraint on phases, putting
$A^i$ in parallel instead of in sequence (as done in $A^\forall$)
allows completion of the same execution.
This yields the desired execution with the appropriate multiset of processes at
the beginning.

Finally,
we show that the tally outcome is the same in both executions.
If $\th=\tr^h$, this follows from the equality of the frames
and equality (up to $\tau$ actions) of traces.
Otherwise, we prove this by contradiction: We note that the bulletin
boards are the same in both executions (by static equivalence of frames
and preservation of observable actions).  Hence, there must be a ballot
yielding two different votes on both sides.  Next, we consider an
extension of the above execution of $\biproc^D$ where this ballot is
given to the opening oracle. The output of this session of the oracle
corresponds to the vote in the respective sides. By hypothesis, the
output message is thus different on both sides.  Since those are
constants, it contradicts the static equivalence of the resulting
frames.
\end{proof}
}

\begin{restatable}{theorem}{mainTheorem}
\label{thm:main}
If an e-voting protocol ensures the Dishonest Condition, the Tally Condition, and, the Relation Condition
then it ensures ballot secrecy.
\end{restatable}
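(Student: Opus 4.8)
The plan is to chain the three conditions in order to turn any fair execution of $\S$ into an indistinguishable fair execution of $\S_r$ carrying the same tally outcome, thereby establishing the three bullet points of \Cref{def:ballot-sec}. Fix a fair execution $(\S;\phi_0;1)\sint{\tr}K$, let $k$ be the last phase it reaches, and decompose $\tr=\tr_0.\Phase{k}.\tr_1$ with no phase action in $\tr_1$. First I would invoke the Dishonest Condition through \Cref{lem:cf} in the $(\Rightarrow)$ direction with $v_i=v_0$ and $v_j=v_1$: this produces fresh names $\nID{A},\nID{B},\nV{0},\nV{1}$ and a fair execution of the sub-role system whose initial multiset is, up to the naming convention for $\nID{\cdot},\nV{\cdot}$, exactly $\fst(\biproc)$, with the same observable trace, a statically equivalent frame, the same tally outcome, and with $[A,v_0]$ and $[B,v_1]$ having had an honest interaction up to phase $k$. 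This is the step that replaces the monolithic system by standalone phase roles and discards the malicious cross-phase history.

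Second, I would apply the Honest Relations Condition. As $\biproc$ is diff-equivalent and the execution just obtained runs $\fst(\biproc)$, there is a lock-step execution of $\biproc$ yielding frames $\phi_l,\phi_r$ with $\phi_l\estat\phi_r$ and identical observable actions on both projections. Its right projection is an execution of $\snd(\biproc)$, which by construction is the sub-role decomposition of $\S_r$: identities are swapped in the vote-leaking phases and votes in the id-leaking phases, so on the right $A$ runs with $v_1$ and $B$ with $v_0$. I would then apply \Cref{lem:cf} in the $(\Leftarrow)$ direction, with $(\id_A,v_i)=(A,v_1)$ and $(\id_B,v_j)=(B,v_0)$, to recombine these phase roles into a fair execution $(\S_r;\phi_0;1)\sint{\tr'}K_r$ satisfying $\obs(\tr')=\obs(\tr)$ and $\phi(K_r)\estat\phi_r$. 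The honest-interaction precondition that this reverse application demands is inherited from the left execution: diff-equivalence reduces both projections with the same recipes and preserves static equivalence, hence the message-equality relations that define ``following the honest trace'' hold on the right as well. Chaining the static equivalences from the two applications of \Cref{lem:cf} and from diff-equivalence gives $\phi(K)\estat\phi(K_r)$, and the observable traces coincide throughout.

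It remains to reconcile the tally outcomes, which is where the Tally Condition enters and which I expect to be the main obstacle, since diff-equivalence by itself does not preserve $\Res$: the destructor $\Extract$ may read different messages under $\phi_l$ and $\phi_r$. By the tally equalities in \Cref{lem:cf} it suffices to prove $\Res(\tr,\phi_l)=\Res(\tr,\phi_r)$. Static equivalence makes the public test $\Psi_b$ succeed on exactly the same handles, so $\BB(\tr,\phi_l)$ and $\BB(\tr,\phi_r)$ share the same ballots $w$; I would then show that each valid ballot extracts to the same vote on both projections. A ballot in case~(2) of the Tally Condition satisfies $\Extract(w\phi_l)\redc v$ and $\Extract(w\phi_r)\redc v$ for a common $v$, so it contributes identically. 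A ballot in case~(1) was cast during an honest interaction, hence in a vote-leaking phase whose vote component is left unchanged by the swap (only identities are permuted there); since $\Extract$ reads only the vote, $\Extract(w\phi_l)\theo\Extract(w\phi_r)$ holds as well. Thus the two multisets of extracted votes coincide, giving $\Res(\tr,\phi_l)=\Res(\tr,\phi_r)$ and therefore $\Res(\tr,\phi(K))=\Res(\tr',\phi(K_r))$.

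The last obligation is that the constructed execution of $\S_r$ is genuinely fair, as \Cref{def:ballot-sec} requires. Fairness is transported by \Cref{lem:cf} in both directions, and the diff-equivalence step reduces the two projections in lock-step, so the phase structure and the outputs on $c_b$ that witness ballot casting are mirrored on the right; the identity swap merely relabels which of $A,B$ owns a given casting session, and since on the left $A$ casts iff $B$ casts, the same holds on the right. All three requirements of \Cref{def:ballot-sec} then hold, so $\S$ and $\S_r$ satisfy ballot secrecy. The delicate part throughout is the bookkeeping of annotations and fresh names so that the three transfers compose, and in particular making the honest-ballot case of the tally argument rest squarely on vote-leaking phases being swap-invariant.
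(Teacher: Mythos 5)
Your overall architecture matches the paper's proof: apply Lemma~\ref{lem:cf} forward, swap via the diff-equivalence of $\biproc$, apply Lemma~\ref{lem:cf} backward, and reconcile the tallies using fairness plus the Tally Condition. However, there is a genuine gap at the pivotal step where you feed the swapped execution back into Lemma~\ref{lem:cf}. You justify its honest-interaction precondition by saying it is ``inherited'' because diff-equivalence reduces both projections with the same recipes. This is not enough, because the swap changes which phase roles are grouped together into a single voter: on the right-hand side of $\biproc$, the voter $[A,v_1]$ is assembled from the id-leaking sub-roles that on the left belonged to $[A,v_0]$ and the vote-leaking sub-roles that on the left belonged to $[B,v_1]$. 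Lock-step execution guarantees that each piece separately follows its own portion of the honest trace, but the reassembled interaction follows the honest trace \emph{as a whole} only if the honest trace creates no cross-links between id-leaking and vote-leaking phases (no input in a phase of one leaking label using a handle from, or sharing a second-order variable with, a phase of the other label). That is exactly the phase-obliviousness half of the Honest Relations Condition, which your proof never invokes; the diff-equivalence of $\biproc$ alone cannot deliver this transfer, and without it the reverse direction of Lemma~\ref{lem:cf} simply cannot be applied. The paper makes this point the explicit hinge of its argument.

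Two smaller issues. First, your decomposition $\tr=\tr_0.\Phase{k}.\tr_1$ does not exist when the execution never leaves phase~1; the paper treats $k=1$ as a separate case, arguing directly with the phase-1 sub-roles. Second, in the tally argument, your case-(1) claim that an honest ballot ``was cast \dots hence in a vote-leaking phase'' does not follow from having an honest interaction; it is an assumption on the labelling. The paper's argument is more robust here: by fairness, $A$ casts iff $B$ casts, and the \emph{pair} of their honest ballots contributes the multiset $\{v_0,v_1\}^{\#}$ to both sides regardless of which phase carries the casting, leaving the Tally Condition to handle only the remaining (possibly dishonest) ballots.
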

\toRM{\begin{proof}[Proof sketch, full proof in Appendix~\ref{sec:ap:proofs-thm}]
Consider a fair execution $(\S;\phi_0;1)\sint{\tr}(\p;\phi;k_f)$.
We apply Lemma~\ref{lem:cf} (using the Dishonest Condition) and obtain
an execution with a simpler structure,
starting with phase roles. The latter execution starts with a configuration
matching the left part of $\biproc$, and hence we can use the diff-equivalence of $\biproc$
(from the Honest Relations Condition)
to build an indistinguishable execution where $\id_A$ and $\id_B$ have swapped their votes.
In the latter execution,
processes corresponding to voters $\id_A$ and $\id_B$ still have an honest interaction.
The latter follows from the fact that the honest trace is {\em phase-oblivious}
(implied by the Honest Relations Condition).
Applying Lemma~\ref{lem:cf} again, we deduce the existence of an execution
$(\S_r;\phi_0;1)\sint{\tr'}(\q;\psi;k_f)$
where $\id_A$ and $\id_B$ follow the honest trace and cast a ballot,
and such that $\obs(tr')=\obs(\tr)$,
$\psi\estat\phi$.

It remains to show that $\Res(\tr,\phi)=\Res(\tr',\psi)$.
When invoking Lemma~\ref{lem:cf} twice, we obtained executions
that have the same tally outcome. Thus, it suffices to prove that
when applying the diff-equivalence of $\biproc$, we obtained an
execution that has the same
tally outcome.
First, we note that the bulletin boards are the same in the two executions
(\ie corresponding to the executions
before and after the vote swap). This is a consequence of static equivalence over frames and
preservation of observable actions.
Next, we split the bulletin boards into ballots of $\id_A$ and $\id_B$ (yielding together the same multiset of
votes on both sides; \ie $\{v_0,v_1\}^\#$) and other ballots.
Finally, we show that the Tally Condition implies that the part of bulletin boards containing the other ballots yield the same multiset of votes
in both executions.
\end{proof}
}

\section{Mechanisation and Case Studies}
\label{sec:caseStudies}

\looseness=-1
We now apply our technique to several case studies, illustrating its
scope and effectiveness.
We show in Section~\ref{sec:case:verif} how \proverif can be used to automatically verify the three conditions.
In Section~\ref{sec:case:proto}, we present and benchmark several
e-voting protocols within our class, and explore several threat models.

\subsection{\luccaN{Verifying the conditions}}
\label{sec:case:verif}
\looseness=-1
We explain in this section how to leverage \proverif to
verify the three conditions via systematic encodings producing ProVerif models.
At the end of this section we present an algorithm that shows that
writing those encodings can be automated, but leave its implementation
as future work. 
We show that the time spent by the algorithm computing those encodings is negligible compared
to the time ProVerif spends to verify the produced models.

\paragraph{\textbf{Guessing leaking labels}}
While it would be reasonable to require from users leaking labels for given e-voting protocols,
very simple heuristics to guess them allow to conclude on all our case studies.
First, the registration phase is often the first phase. Hence, guessing that the first phase is the only
id-leaking phase always allows to conclude on our examples.
Similarly, the following heuristic for guessing leaking labels of names proved to be precise enough:
if the name is output then it takes the leaking label of the corresponding phase,
if the name is used as signature key then it is id-leaking and
otherwise it takes the leaking label of the phase of its first use. 

\paragraph{\textbf{Sound Verification of The Tally Condition}}
It is possible to verify the Tally Condition by analysing the diff-equivalence
of the biprocess $\biproc$ in presence of an oracle opening all ballots
(\ie $\openAllBal$ defined in Section~\ref{sec:condi:dishonest}).
The diff-equivalence of $\biproc^T = \biproc\uplus\{!\openAllBal\}$ implies the diff-equivalence
of $\biproc$ and
for all executions and valid ballots, item 2 of the Tally Condition.
We formally state and prove the former
in Appendix~\ref{ap:caseStudies}. We also describe in Appendix~\ref{ap:caseStudies}
an independent way to establish the condition based on
trivial syntactical checks that always imply the Tally Condition but that is less tight.

\paragraph{\textbf{The Dishonest Condition}}
We explain how to verify item (1) of the 
Dishonest Condition using {\em correspondence properties of events} that \proverif can verify.
We can equip the e-voting system $\S$ with events that are fired with
each input and output, and that contain exchanged messages
and session annotations.
Then, the fact that a specific voter passes a phase or casts a valid
ballot can be expressed by such events.
Further, the fact that a specific voter had an honest interaction (up to a certain phase or not)
can be expressed as implications between events.
For instance, if $\th=\Out(c,w).\In(c,\langle X; w\rangle)$ then one would write
$
\mathtt{EventIn}(a,\langle x; y_w\rangle)\Rightarrow
\mathtt{EventOut}((\id,v),y_w)$ where $\id,v$ are voter annotations
and $a$ is a role session annotation, $x$ and $y_w$ are variables and open messages in
events must pattern-match with the exchanged messages.
Note that this technique has already been used in a different context
in the tool UKano~\cite{HBD-sp16}.
Next, the fact that such an honest interaction should be disjoint can be established
by verifying that outputs from honest executions should be different modulo $\theo$.

\paragraph{\textbf{Algorithm for verifying all conditions}}
The input format of our algorithm is a valid ProVerif file containing at least:
public constants modelling $\phi_0$ and $\mathcal{V}$,
function and reduction rules modelling $\Psi_b$ and $\Extract$
and a biprocess for each role describing $V,\roles_o$ and the idealised execution.
Formally, the left part of a biprocess associated to a role $A$ 
should model $A(\vect{n^\id},\vect{n}^v)$ while the right part should model
$A^{\forall}(\vect{n^\id},\vect{n}^v)$ where input messages are replaced by messages received in
the honest execution.
Moreover, constants in such messages corresponding to second-order
variables in the honest execution shall be given distinguished names. Therefore, the right part
of those biprocesses both specify the idealised execution and the honest trace.
Hence, the user is just required to specify an e-voting protocol according to Definition~\ref{def:evoting}.

\looseness=-1
As explained, from such a file, the honest trace $\th$ can be retrieved (by syntactical equality between inputs and parts of outputs)
and the fact that $\th$ is phase-oblivious can be checked via a linear-time syntactical check.
Exploiting the right part of the given biprocesses, the algorithm can compute $A^{\forall}(\vect{n^\id},\vect{n}^v)$
and thus $A^{k}(\vect{n^\id},\vect{n}^v)$ for all $1 \le k\le k_f$.
Using the aforementioned heuristic, the algorithm guesses leaking labels for names and phases
and deduce $\roles^v(\vect{n^\id},\vect{n}^v)$ and 
$\roles^\id(\vect{n^\id},\vect{n}^v)$.
The algorithm then deduces $\biproc$ and, using the two functions modelling $\Psi_b$ and $\Extract$,
it also deduces $\biproc^T$.
When $\th$ does contain second-order variables, the algorithm computes $\biproc^D$ from
the left part of the given biprocesses and roles $A^{\forall}(\vect{n^\id},\vect{n}^v)$.

Finally, the algorithm produces two or three files:
(a) a file containing $\biproc^T$, (b)
a file containing correspondence properties using encoding described
above for modelling the Dishonest Condition, item (1),
and, (c) if $\th\neq\tr^h$,
a file containing $\biproc^D$.
Then, ProVerif is used to verify the diff-equivalence
of $\biproc^T$, all the correspondence properties and, when necessary, diff-equivalence of $\biproc^D$.
If all checks hold then the algorithm deduces that the given e-voting protocol ensures ballot secrecy.

All the described tasks the algorithm should perform are linear-time syntactical manipulations of the given
input data. Therefore, the cost of computing the three ProVerif files is negligible compared to
the time spent by ProVerif for verifying the files. In our benchmarks, we thus only measured the latter.

\renewcommand{\check}{\cmark}
\newcommand{\nope}{\xmark}
\newcommand{\nonterm}{\ding{61}}

\subsection{Case Studies}
\label{sec:case:proto}
We now describe the different e-voting systems we verified with our approach
and compare (when possible) with the current state of the art 
(see \Cref{fig:benchmarks}).
We first give in-depth descriptions of the JCJ and Lee case studies, and
then list other case studies for which we only give
high-level descriptions.

We benchmark the verification times of our approach vs.~the only 
comparable prior approach, \ie, the {\em swapping
technique}~\cite{vote-CSF16}. The swapping technique uses
a direct encoding of ballot secrecy in
\proverif with synchronisation barriers where processes can be swapped.
Other approaches are not automated \luccaN{(require non systematic manual efforts)},
or do not deal with the protocols and threat models we consider
(see discussions in~\Cref{sec:back:stateArt}).
\luccaN{Notably, while Tamarin is expressive enough to describe our case studies~\cite{dreier2017beyond},
it does not yet allow to automatically prove them%
.}
We summarise our results in~\Cref{fig:benchmarks} and provide all our \proverif models at~\cite{pv-code}.
%

\begin{figure*}[t]
  \centering
          \begin{tabular}[h]{|l|c|r|r|}
          	  \hline
		& 
		  & \multicolumn{2}{c|}{Analysis time in seconds} \\
		  \cline{3-4}
		  Protocol & Ballot Secrecy & Swapping & Our approach \\
    \hline
    \hline
FOO     & verified & 0.26     & 0.04   \\
Lee 1   & verified & 46.00    & 0.04   \\
Lee 2   & verified & \nonterm & 0.05   \\
Lee 3   & verified & \nonterm & 0.01   \\
Lee 4   & attack   & 169.94   & 6.64   \\
JCJ     & verified & $\star$  & 18.79  \\
Belenios& verified & $\star$  & 0.02   \\
    \hline
          \end{tabular}
	\caption{Analysed protocols and results.
Tests were performed using \proverif 1.94 on a single 2.67GHz Xeon core with 48GB of RAM.
\nonterm\xspace indicates non-termination within 45 hours or consumption of more than 30GB of RAM.
We use $\star$ to indicate the approach yielded spurious attacks,
	which implies that the analysis is inconclusive.
All our \proverif models are available from~\cite{pv-code}.
	}
  \label{fig:benchmarks}
\end{figure*}

\paragraph{\bf JCJ Protocol~\cite{juels2005coercion}}
\label{sec:caseStudies:JCJ}
We analysed the JCJ protocol~\cite{juels2005coercion} used in the Civitas system~\cite{civitas-SP08}.
It has been designed to achieve a strong notion of privacy (\ie coercion-resistance) but we limit our analysis to
ballot secrecy.

\newcommand{\cred}{\mathit{cred}}
\newcommand{\skr}{\sk_R}
\newcommand{\skt}{\sk_T}
\newcommand{\pkt}{\pk_T}
\newcommand{\pkr}{\pk_R}
\newcommand{\PET}{\mathsf{PET}}
\newcommand{\lab}[1]{\raisebox{-12pt}{#1}}

The JCJ protocol is depicted in Figure~\ref{fig:JCJ}.
In a first phase, the voter requests a credential by disclosing its identity to a registrar who replies on a secure channel
with a fresh credential $\cred$. In a second phase, the registrar sends to the tally the created credential randomised and encrypted
with the tally's public key and signed with the registrar's signing key. This will be used by the tally to authenticate
ballots from registered voters.
Then, in a third phase, the voter casts his ballot who takes the form of a complex Zero Knowledge (ZK) proof whose the public part
(first argument of $\mathsf{ZK}$) includes
(i) a randomised encryption of her vote and (ii) a randomised encryption of her credential and
whose the private part (second argument) contains the knowledge of the underlying credential and vote.
The tally can then verify the ZK proof and perform a Plaintext Equality Test (PET) between part (ii)
of the ZK proof and some encrypted credential that has been signed by and received from the registrar
(at this point, the ballot is verified and can be published on the bulletin board).
Finally, the encrypted vote can be opened, possibly after mixing, to reveal the vote.

\begin{figure*}[t]
  \centering
  \setmsckeyword{}
  \drawframe{no}    

  \vspace{-20pt}

  \hspace{-30pt}
  \null
  \begin{msc}{}
    \setlength{\instwidth}{1.5\mscunit} 
    \setlength{\instdist}{5cm}  
    \setlength{\stopwidth}{0pt}
    \setlength{\instfootheight}{0.05\mscunit}
    \declinst{voter}{Voter}{$\id, v$}
    \declinst{regi}{Registrar}{$\skr$}
    \declinst{tallier}{Tallier}{$\skt$}

    \nextlevel[-0.7]
    \mess{\lab{$\id$}}{voter}{regi}
    \nextlevel[0.8]
    \mess{\lab{$\cred$}}{regi}{voter}
    \messarrowscale{0}
    \nextlevel[0.5]
    \mess*{}{voter}{tallier}\nextlevel[1]
    \messarrowscale{1.5}
    \mess{\lab{$\sign(\aenc(\pair{\cred}{r_R},\pkt),\skr)$}}{regi}{tallier}
    \messarrowscale{0}
    \nextlevel[0.5]
    \mess*{}{voter}{tallier}\nextlevel[1]\messarrowscale{1.5}
    \mess{\lab{$\ZK([\aenc(\pair{\cred}{r_V^1},\pkt),\aenc(\pair{v}{r_V^2},\pkt)] ;
                    [\cred,v,r_V^1,r_V^2])$}}{voter}{tallier}
    \nextlevel[-0.4] 
  \end{msc}
  \vspace*{-5pt}
  \caption{Informal presentation of JCJ (phases are separated by dashed lines)}
  \label{fig:JCJ}
\end{figure*}

We adapted the modelling from~\cite{vote-CSF08-maffei} (including the modelling of the ZK proofs)
to consider a strictly stronger threat model (for ballot secrecy).
We assumed that the registrar and the tally are honest, but that their secret keys $\skr$ and $\skt$ are compromised.
We let the tally output verified ballots on the public channel $c_b$
(thereby also taking the role of the ballot box).
A voter requests a credential by revealing its identity in the clear but receives its credential on a secure channel.
Voters send ballots on an anonymous channel.
Naturally, the first phase and $\cred$
are id-leaking, while the two last phases and $r_V^1,r_V^2,r_R$ can be considered vote-leaking (following the heuristic from Section~\ref{sec:case:verif}).
We were able to establish our three conditions automatically with ProVerif and therefore establish ballot secrecy.

In comparison, the direct encoding of ballot secrecy with the swapping technique~\cite{vote-CSF16} fails to establish ballot secrecy.
We have identified two main, independent reasons. 
First, when considering an unbounded number of honest voters, one also needs to consider an unbounded number of sessions of the registrar (this holds for the tallier as well).
This is incompatible with equipping the registrar with synchronisation
barriers 
yielding spurious attacks in practice (registrar sessions for A and B should be swapped after the first phase).
While it may be possible to manually apply the barriers elimination theorem~\cite{vote-CSF16}, an independent problem inherent to
the swapping technique still prevents us to conclude.
Indeed, even when considering the simplest scenario with only two honest voters and no dishonest voter,
the swapping technique on our model yields a spurious attack. This is caused by a systematic limitation of the latter technique when ballot secrecy relies
on the freshness of data produced in previous phases (here $\cred$).
We explain the underlying reason in Section~\ref{sec:rel:swapping}. Those two limitations are still problematic
when the two last phases are collapsed and removing all phases is not successful either.

We note that unlike the automatic analysis of JCJ from~\cite{vote-ESO16}, we took the registration
phase into account. Importantly,~\cite{vote-ESO16} would not be able to
do the same since their framework allows only one phase before the tally
(i.e., the voting phase). Therefore, this is a real limitation and not a
simple divergence of modelling choices.
Note also that~\cite{vote-CSF08-maffei} analyses JCJ
for coercion-resistance. 
However, they considered a simpler threat model in which the registration
phase is completely hidden from the attacker.  Moreover, their approach required
manually and cleverly designed protocol-specific encodings
since one has to ``explicitly encode in the biprocess the proof strategy'' according to~\cite{vote-CSF08-maffei}.

\paragraph{\bf Lee \textit{et al.} Protocol~\cite{DKR-jcs09}}
\label{sec:caseStudies:LEE}
We now support the claim that our class of e-voting protocols is expressive enough
to capture a large class of threat models
by analysis several threat models
for Lee \textit{et al.} (variant proposed in~\cite{DKR-jcs09}).
This protocol contains two phases. In the registration phase,
each voter encrypts her vote with the tally's public key, signs the ciphertext and (output~i) sends both messages
to the registrar.
The registrar verifies the signature, re-encrypts the ciphertext using a fresh nonce
and (output~ii) sends to the voter this signed ciphertext along with a Designated Verifier Proof (DVP) of re-encryption.
The voter can then verify the signature and the DVP.
Finally, in the voting phase, the voter (output~iii) sends its ballot,
which is the previously received signed re-encryption.
We reused and adapted ProVerif models from~\cite{vote-CSF16}.

\noindent
\textit{Lee 1.}
The first threat model we consider is the only one analysed in~\cite{vote-CSF16}.
It considers the registrar's signature key and the tally's private key
corrupted, and considers infinitely dishonest voters.
The channel of outputs (i) and (ii) 
is assumed to be untappable
(\ie everything is completely invisible to the attacker)
while the channel of output (iii) is anonymous.
Since the registrar's signing key is corrupted, the dishonest voters do not need to have access to
registrar sessions (they can be played by the environment).
Similarly, 
there is no need to explicitly model the tally.
This considerably simplifies the models one needs to consider, partly explaining
the effectiveness of the swapping technique~\cite{vote-CSF16}  (46s).

\noindent
\textit{Lee 2.}
\looseness=-1
In this scenario, we no longer consider the tally's key corrupted. When verifying ballot secrecy without our conditions, it is thus mandatory
to explicitly model the tally. This change to the model causes \proverif
to not terminate on the direct encoding of ballot secrecy with the swapping technique. 
We thus tried 
to approximate the model. We collapsed the two phases into one, which
enables \proverif to terminate in 45.33 seconds on the direct-encoding.
Unfortunately, this approximation does not always solve the problem: if the
security relies on the phases, one would obtain spurious attacks. For
instance, removing all phases causes
\proverif to return a spurious attack. More importantly, this approximation
is not sound in general (we may miss some attacks).
%
In contrast, the verification of our conditions
only takes a fraction of a second without the above approximation.

\noindent
\textit{Lee 3.} 
\looseness=-1
We additionally consider a
secure registrar signing key.  We now need to explicitly
model a registrar for dishonest voters.  As for the previous
model, \proverif is unable to directly conclude.  After
collapsing phases, it terminates in 269.06 seconds. In contrast, our
approach concludes in under 0.1 second.

\noindent
\textit{Lee 4.}
We modify the previous threat model and weaken the 
output channel's security (i) to be
insecure instead of untappable. In this case, ballot secrecy no longer holds.
\proverif returns an attack on the tally condition (verified using the ballot-opening oracle;
see \Cref{sec:case:verif}). Relying on the latter,
we can immediately infer the attack on ballot secrecy.

\paragraph{\bf Other Case Studies}
\label{sec:caseStudies:others}
We verified the three conditions for FOO~\cite{fujioka1992practical} as described in our running example
(with a dishonest registrar and considering dishonest voters).
We use the same modelling and threat model as in~\cite{vote-CSF16}.



We analysed the Belenios protocol~\cite{cortier2014election} (in its
mixnet version),
which builds on the Helios protocol~\cite{adida2008helios}, and considered the same threat model as for JCJ.
Again, contrary to~\cite{vote-ESO16}, we took the registration phase into account. Note that the swapping technique failed
to conclude because of spurious attacks for similar reasons as for JCJ.

We finally discuss the protocol due to Okamoto~\cite{okamoto1996electronic} as modelled in~\cite{DKR-jcs09}.
This protocol features trap-door commitments that ProVerif is currently
unable to deal with. 
However, this protocol lies in
our class and our theorem thus applies. This 
could both ease manual verification and future automated verification
(\eg recent analysis~\cite{dreier2017beyond}).


\section{Related Work}
\label{sec:back:stateArt}

As mentioned before, diff-equivalence is known to be too 
imprecise to analyse vote-privacy via a direct encoding
(acknowledged \eg in~\cite{DKR-jcs09,vote-CSF16,vote-ifip,vote-CSF08-maffei}).

\paragraph{\textbf{Swapping technique}}
\label{sec:rel:swapping}
The swapping technique originates
from~\cite{vote-ifip}, and 8 years later, was formally proven and
implemented in 
\proverif~\cite{vote-CSF16}.
It aims to improve the precision of diff-equivalence for protocols with {\em synchronisation barriers}.
The main idea is to guess some process permutations 
at the beginning of each barrier and then verify a
modified protocol based on these permutations.
We give an example showing this mechanism in Appendix~\ref{sec:app:swapping}.
Theoretically, the permutations do not break trace equivalence since they transform configurations
into structurally equivalent configurations.
This approach is only compatible with replication in a very constrained way:
all barriers above and below a replication must be
removed\toRM{\footnote{Intuitively, this is necessary to consider
		finitely many permutations.}},
which reduces precision.
%
%
%
Given a model with barriers, the front-end 
first generates several biprocesses without barriers,
each corresponding to a possible swap strategy (\ie the permutation
done at each barrier); note that the number of such strategies grows exponentially with the size of the system (number of phases or number of roles).
The equivalence holds if one of the produced biprocesses is diff-equivalent (proven in~\cite{vote-CSF16}).
\luccaN{Similar techniques~\cite{dreier2017beyond} have been used in the tool Tamarin relying on multisets. Essentially, all agents are
put in a multiset at synchronisation barriers and a rule allows to shuffle this multiset before moving to the next phase.
Therefore, the same limitations \wrt replications hold.
Moreover, Tamarin will also have to explore all possible swaps.
}
\lumN{full arguments in comments}
The fact that no replication can be put under a barrier notably forbids to model authorities crossing phases
(because one needs to consider unbounded number of sessions of them) as well as threat models
where no dishonest voter is considered for the same reason. 
The swapping approach also suffers from systematic precision issues when the security relies on the freshness of data created in previous phases.
Indeed, the compiler introduces many new internal communications in the produced biprocesses. 
However, the very abstract treatment of internal communication used by ProVerif
causes the tool to also explore the possibility of swapping
data with an old session whose data has already been swapped before. We
consider this to be a significant limitation,
which manifests itself as 
spurious attacks \eg the ones for JCJ and Belenios (see Figure~\ref{fig:benchmarks}); the credential being
the fresh data coming from the registration phase and used during the voting
phase.
We provide more details in Appendix~\ref{sec:app:swaprestr}.


\paragraph{\textbf{Small-attack property}}
\label{sec:rel:others}
A different line of work is to devise small attack properties, as
for example in~\cite{vote-ESO16}
for ballot secrecy. 
They show that proving ballot secrecy for some specific finite-scenarios implies ballot secrecy for the
general, unbounded case. 
The focus in~\cite{vote-ESO16} is on complex ballot weeding mechanisms,
as used for example in Helios~\cite{adida2008helios}.
In contrast to our work, they require that the pre-tally part 
contains only
one voting phase 
that must be action-determinate (same actions yield
statically equivalent frames). This approach is therefore unable to deal with e-voting protocol
models that involve more than one phase, like the ones we consider in \Cref{sec:caseStudies}.
Moreover, considering only one phase greatly simplifies the verification since
it hides the diff-equivalence problems mentioned
previously.
Moreover, the finite-scenarios still lead to state space explosion problems.
Because of this, they were unable to automatically verify the JCJ
protocol, even without modelling the registration phase.


\paragraph{\textbf{Privacy via type-checking}}
\label{sec:rel:typeChecking}
A sound type system for proving trace equivalence has
been proposed~\cite{cortier2017type}, which seems a promising approach.
This work reuses diff-equivalence as an approximation of trace
equivalence and thus suffers from its limitations.
Moreover, it is limited to standard primitives ((a)symmetric encryption,
signatures, pairing, and hashing),
which means that it currently cannot deal with our case studies, since
the protocols in our case studies use primitives that are not supported
by this method yet.

The analysis method we develop in this paper borrows the
methodological approach of~\cite{HBD-sp16}: devise sufficient conditions implying a complex privacy
property hard to verify, via a careful analysis and categorisation of known attacks.
However, we target a different class of protocols and property and devise different conditions.
%


\section{Conclusion}
\label{sec:conclusion}
%
We presented three conditions that together imply ballot secrecy. They
proved to be tight enough to be conclusive on
 several case studies. 
Verifying ballot secrecy in a modular way via our conditions constitutes
a new approach which outperforms prior works:
we cover a greater class of e-voting protocols and threat models, and
the analysis is more efficient.
Our new approach has also opened
several avenues for future work.
First, our notion of tally is currently limited.
Hence, our method is currently unable to deal with revotes. While adding
revotes might be directly achievable,
we conjecture that considering revoting policies (\eg ballot weeding in Helios as analysed in~\cite{vote-ESO16})
is a more intricate challenge.
Furthermore, our notion of tally cannot deal with homomorphic tallying
and only produces a set of votes, while e-voting protocols satisfying {\em verifiability}
should also produce verification data (\eg ZK proofs of correct decryption). We would like to extend our class of e-voting protocols accordingly.
Second, our notion of fairness and our Dishonest Condition currently lack
precision in the presence of certain mixnet roles.
For instance, a degenerated mixnet such as 
$M=\phase{1}\In(c,x).\phase{2}\Out(c,\sdec(x,k))$ (where $c$ is public)
currently introduces spurious attacks that are not prevented by our fairness condition
and that are detected by our Dishonest Condition (note that the problem disappears
when $c$ is private).
Third, we believe that our conditions can be adapted to enforce more complex privacy-type 
properties in e-voting protocols such as {\em receipt-freeness} and {\em coercion-resistance}~\cite{DKR-jcs09,vote-CSF08-maffei}.
Fourth, we want to implement our algorithm for verifying all conditions as a ProVerif front-end.
\lumN{other limitations and future work in comments.}

We think that the modular {\em privacy via sufficient conditions}
methodology we presented advances the state-of-the-art for
the automated analysis of privacy-related properties, and paves the way for
further developments.

\bibliographystyle{splncs}
\bibliography{biblio}

\appendix[Supplementary material]

\newcommand{\mysub}[1]{\subsection{\textbf{#1}}}
\newcommand{\mysubsub}[1]{\subsubsection{#1}}

\section{Model and Definitions}

\mysub{Term algebra}
\label{sec:app:model}
\label{subsec:term}
We now present the term algebra used to model messages
built and manipulated using various cryptographic primitives.
We assume an infinite set $\N$ of \emph{names}, used to represent
keys and nonces; and two infinite and disjoint sets of \emph{variables}
$\X$ and $\W$. Variables in~$\X$ are used to
refer to unknown parts of messages expected by participants,
while variables in~$\W$ (called {\em handles})
are used to store messages learned by the 
attacker.
We consider a \emph{signature}~$\Sigma$ (\ie a set of function
symbols with their arity). $\Sigma$ is the union
of two disjoint sets:
the \emph{constructor} and \emph{destructor} symbols, \ie, $\Sigma =
\Sigma_c \cup \Sigma_d$.
Given a signature $\mathcal{F}$, and a set of initial data
$\mathsf{A}$, we denote by $\T(\mathcal{F},\mathsf{A})$ the set of terms built
using atoms in $\mathsf{A}$ and function symbols in $\mathcal{F}$.
A {\em constructor term} is a term in $\T(\Sigma_c, \N \cup \X)$.
We denote by $\vars(u)$ the set of variables that occur in a term $u$
and call \emph{messages} the constructor terms that are \emph{ground} (\ie, $\vars(u) = \emptyset$).
Sequences of elements are shown bold (\eg $\vect{x},\vect{n}$).
The application of a substitution $\sigma$ to a term $u$ is written
$u\sigma$, and $\dom(\sigma)$ denotes its \emph{domain}.

\begin{example}
\label{ex:app:signature}
Consider the signature\\[1mm]
\null\hfill
$\begin{array}{rl}
   \Sigma = \{&\sign,\versign,\;\pkv,\; \bl,\;\unbl,\;\com,\\
              &\;\open,\;\langle \, \rangle, \;
              \projl, \; \projr, \; \eq, \;\ok\}.
 \end{array}$
\hfill\null\\[1mm]
The symbols $\sign,\versign,\bl,\unbl,\com$ and $\open$ of
arity 2 represent signature, signature verification, blind signature, unblind, commitment and
commitment opening. Pairing is modelled using $\langle \rangle$ of arity 2, whereas projection
functions are denoted $\projl$ and $\projr$, both of arity 1.
Finally, we consider the symbol $\eq$ of arity 2 to model
equality tests, as well as the constant symbol $\ok$. This signature is
split into two parts:
$
\Sigma_c = \{\sign,\;\pkv,\;\bl,\;\unbl,\com,\;\langle \, \rangle, \;\ok\}$ and
$\Sigma_d = \{\versign,\;\open,\;\projl, \; \projr, \; \eq\}$.
\end{example}

\mysubsub{Equational Theories}
As in the process calculus presented in~\cite{BlanchetAbadiFournetJLAP08}, 
constructor terms are subject to an equational theory
used for
for modelling algebraic properties of cryptographic primitives.
Formally, we consider a congruence~$\theo$ on $\T(\Sigma_c,\N \cup \X)$,
generated from a set of equations $\E$ over $\T(\Sigma_c,\X)$.
We say that a function symbol is {\em free} when it does not occur
in $\E$.
\begin{example}
\label{ex:app:xor}
To reflect the algebraic properties of the blind signature, we may
consider the equational theory generated by the following equations:\\[1mm]
\null\hfill$
\begin{array}{rcl}
\unbl(\sign(\bl(x_m,y),z_k),y) &=& \sign(x_m,z_k)\\
\unbl(\bl(x_m,y),y) &=& x_m. \\
\end{array}$\hfill\null
\end{example}

\mysubsub{Computation relation and rewriting systems}
\label{sec:app:comp}
We can also give a meaning to destructor symbols.
For this, we need a notion of \emph{computation relation}
$\redc : \T(\Sigma,\N)\times\T(\Sigma_c,\N)$ such that
for any term $t$, $(t\redc u\iff t\redc u')$ if, and only if, $u\theo u'$.
While the precise definition of the computation relation is unimportant
for this paper, we describe how it can be obtained from \emph{rewriting systems}.

A rewriting system is a set of rewriting
rules 
${\gfun(u_1,\ldots, u_n) \to u}$ where $\gfun$ is a
destructor, and $u, u_1, \ldots, u_n \in \T(\Sigma_c,\X)$.
A ground term~$t$ can be rewritten into $t'$ if there is  a
position~$p$ in~$t$ 
and a rewriting rule $\gfun(u_1,\ldots, u_n) \to u$
such that $t|_p = \gfun(v_1,\ldots,v_n)$ and $v_1 =_\E u_1\theta, \ldots, v_n=_\E u_n\theta$ for
some substitution $\theta$, and $t' = t[u\theta]_p$ (\emph{i.e.} $t$
in which the sub-term at position $p$ has been replaced by~$u\theta$).

Moreover, we assume that $u_1\theta,
\ldots, u_n\theta$, and $u\theta$ are messages.
Finally, for some term $t$, we have $t\redc u$ when $u$ is a message such that 
$t\to u$. We write $t\redc$ to denote that there exists some $u$ such
that $t\redc u$, and write
$t\redcb$ otherwise.

\begin{example}
\label{ex:app:rewriting-ap-ap}
The properties of 
symbols in $\Sigma_d$ (see Example~\ref{ex:app:signature})
are reflected through the
following rewriting rules:
\[
\begin{array}{c}
\open(\com(x_m,y),y) \to x_m
\;\;\;\;\;\;
\eq(x,x) \to \ok\\
\versign(\sign(x_m,z_k),\pkv(z_k)) \to x_m\\
\proj_i(\langle x_1,x_2\rangle) \to x_i \;\;\mbox{ for $i \in \{1,2\}$.} 
\end{array}
\]
This rewriting system is convergent modulo the equational theory $\E$
given in Example~\ref{ex:app:xor}, and induces a computation relation as described above.
For instance, we have that
$$\open(
  \versign( 
    t,
     \pkv(\mathrm{sk_A})
  ),
  k_c)\redc v
$$
where $t=     \unbl(
        \sign(
           \bl(
              \com(v,k_c),
               k_b
           ),
           \mathrm{sk_A}
        ),
        k_b)
     )$
because 
$t\theo\sign(\com(v,k_c),\mathrm{sk_A}).$
\end{example}

\begin{example}
\label{ex:app:rewriting-AND-ap}
We are able to model the $\land$ boolean operators using a destructor
$\AND\in\Sigma_c\cap\Sigma_\pub$ and the following rewriting rule:
$\AND(x,y)\to \ok$.
The induced computational relation satisfies for any terms $t_1,t_2$ that
$\AND(t_1,t_2)\redc \ok$ if, and only if,
$t_1\redc$ and $t_2\redc$.
\end{example}

For modelling purposes, we also split the signature $\Sigma$ into two
parts, namely $\Sigma_\pub$ (public function symbols, known
by the attacker) and $\Sigma_\priv$ (private function symbols).
An attacker builds his own messages by applying public function symbols to
terms he already knows and that are available through variables
in~$\W$. Formally, a computation done by the attacker is a
\emph{recipe} (noted $R$), \ie, a term in $\T(\Sigma_\pub,\W)$.


\mysub{Process algebra}
\label{subsec:procalg}

We assume $\Ch_\pub$ and $\Ch_\priv$ are disjoint sets of public and private channel
names and note $\Ch=\Ch_\pub\cup\Ch_\priv$. 
Protocols are specified using the 
syntax in Figure~\ref{fig:syntax}.


 \begin{figure}[t]
   \null\hfill
   $\scalemath{0.8}{
  \begin{array}{rclcl}
    P,Q &:=&  0 & & \mbox{null}\\[0.5mm]
    &\mid & \In(c, x).P && \mbox{input}\\[0.5mm]
    &\mid&\Out(c, u).P &&\mbox{output} \\[0.5mm]
    &\mid& \Let \; x = v \;\In \; P \; \Else \; Q&&
    \mbox{evaluation}\\[0.5mm]
    &\; \mid \; & P \mid Q&&\mbox{parallel}\\[0.5mm]
    &\mid& \new n. P && \mbox{restriction} \\[0.5mm]
    &\mid&  !P && \mbox{replication} \\[0.5mm]
    &\mid&  \phase{i} P && \mbox{phase} \\[0.5mm]
  \end{array} 
  }$\hfill\null\\
  where $c \in \Ch$, $x \in \X$, $n \in \N$, $u \in \T(\Sigma_c, \N
  \cup \X)$, $i\in\mathbb{N}$,
  and $v\in\T(\Sigma, \N\cup\X)$.
  \caption{Syntax of processes}
 \label{fig:app:syntax}
 \end{figure}

Most of the constructions are standard.
The construct  $\Let \; x = v \;\In \;P \; \Else \; Q$
tries to evaluate the term $v$ and in case of success, 
\ie when $v \redc u$ for some message $u$, the process $P$ 
in which $x$ is substituted by~$u$ is executed;
otherwise the process $Q$ is executed.
Note also that the $\Let$ instruction together with the
$\eq$ theory (see Example~\ref{ex:app:rewriting-ap-ap}) can encode the usual
conditional construction.
The replication $!P$ behaves like an infinite parallel composition
$P|P|P|\ldots$.
We shall always deal with guarded processes of the form $\phase{i} P$.
Such a construct $\phase{i} P$ indicates that the process
$P$ may be executed only when the current phase is $i$.
%
%
The construct $\nu n.P$ allows to create a new, fresh name $n$.
For a sequence of names $\vect{n}$, we may note $\nu\vect{n}.P$
to denote the sequence of creation of names in $\vect{n}$ followed by $P$.
For brevity, we sometimes omit ``$\Else\;0$''
and null processes at the end of processes.
We write~$\fv(P)$ for 
the set of \emph{free variables} of~$P$, \ie the set
of variables that are not in the scope of an input or a $\Let$
construct.
A process $P$ is ground if $\fv(P) = \emptyset$.

The operational semantics of processes is given by a labelled transition
system over \emph{configurations} (denoted by $K$) $(\p;\phi;i$)
made of a multiset $\p$ of ground processes,
$i\in\mathbb{N}$ the current phase, and
a {\em frame} $\phi = \{w_1 \refer u_1, \ldots, w_n \refer u_n\}$
(\ie a substitution where $\forall i, w_i\in\W,\ u_i\in\T(\Sigma_c,\N)$).
The frame~$\phi$ represents the messages known to the attacker.
Given a configuration~$K$, $\phi(K)$ denotes its frame.
We often write $P \cup \p$  instead of $\{P\} \cup \p$
and implicitly remove null processes from configurations.

\renewcommand{\key}{\mathsf{key}}

\begin{figure}[t]
  \centering $
  \begin{array}[h]{rclr}
    V(\id,v) &=&\phase{1}
                 \new k.\new k'.\Out(c, \langle \pk(\key(\id)); s\rangle). 
              \In(c, x). \\   
             && \If\, \versign(x,\pk(k_R))=e &\\
             && \Then\, \phase{2} \Out(c,\unbl(x,k')). 
             \In(c, y). \\   
             && \If\, y=\langle y_1 ; M\rangle \\
             && \Then\;\Out(c,\pair{y_1}{\pair{M}{k}}) &\\
  \end{array}
  $
  \caption{Voter role of FOO (for some channel $c\in\Ch_\pub$, where
 $M = \com(v,k)$, 
$ e = \bl(M,k')$ and
$s=\sign(e,\key(\id))$)}
  \label{fig:app:foo-role}
\end{figure}

\begin{example}
  \label{ex:app:process}
We use the FOO protocol~\cite{fujioka1992practical}
(modelled as in~\cite{vote-CSF16})
 as a running example.
FOO involves voters and a registrar role.
In the first phase, a voter commits and then blinds its vote and
sends this blinded commit signed with his own signing key $\key(\id)$
to the registrar.
The function symbol $\key(\cdot)$ is a free
private function symbol associating a secret key to each identity.
The registrar is then supposed to blindly sign the committed vote
with his own signing key $k_R\in\Sigma_c\cap\Sigma_\priv$ and sends
this to the voter.
In the voting phase, voters anonymously output their committed
vote signed by the registrar and, on request, anonymously send
the opening for their committed vote.

The process corresponding to a voter session
(depending on some constant
$\id$ and a constant $v$)
is depicted in Figure~\ref{fig:app:foo-role}.
A configuration corresponding to a voter $A$ ready to vote $v_1$
with an environment knowing the registrar's key is
$K_1=(\{V(A,v_1)\};\{w_R\mapsto k_R\};1)$.
\end{example}
The operational semantics of a process 
is given by the relation
$\lrstep{\alpha}$
defined as the least relation over configurations satisfying the rules
in Figure~\ref{fig:semantics}.
For all constructs, phases are just handed over to continuation processes.
Except for the phases, the rules are quite standard and correspond to the
intuitive meaning of the syntax given in the previous section. 
\toRM{The first rule \textsc{In} 
allows the attacker to send a message on a public channel as long as it is
the result of a computation done by applying a recipe to
his current knowledge.
The second rule \textsc{Out} corresponds to the output of a term on a public channel:
the corresponding term is added to the frame
of the current configuration.
The rule \textsc{Com} corresponds to an internal communication on a private channel
that the attacker cannot eavesdrop on nor tamper with.
Finally, the rule \textsc{Phase} allows a process to progress 
to its next phase
and the rule \textsc{Next} allows to drop the current phase and
progress irreversibly to a greater phase.}

The rules \textsc{In,Out,Next} are the only rules that produce observable
actions (\ie, non $\tau$-actions).
\toRM{However, for reasons that will become clear
later on (notably for Definition~\ref{def:honest-trace}),
we make a distinction when a process evolves using $\textsc{Let}$ or
$\textsc{Let-Fail}$.}
The relation $\lrstep{\alpha_1 \ldots \alpha_n}$ between
configurations (where~$\alpha_1 \ldots \alpha_n$ is a sequence of actions) 
is defined as the transitive closure of~$\lrstep{\alpha}$. 


\mysub{Trace Equivalence}
\label{ap:model:eint}

Continuing Paragraph~\ref{subsec:trace-equiv}, we give the formal definition of static equivalence.

\begin{definition}
A frame $\phi$ is \emph{statically included} in $\phi'$
when $\dom(\phi) = \dom(\phi')$, and
\begin{itemize}
\item for any recipe $R$
  such that $R\phi \redc u$ for some $u$,
 we have that $R\phi'\redc u'$ for some $u'$;
\item for any recipes $R_1,R_2$ such that
  $R_1\phi\redc u$ and $R_2\phi\redc u$ for some message $u$,
  there exists a message $v$ such that $R_1\phi'\redc v$ and
  $R_2\phi'\redc v$.
\end{itemize}
Two frames $\phi$ and $\phi'$ are in \emph{static equivalence},
written $\phi \estat \phi'$, if the two static inclusions hold.
\end{definition}

 \begin{example}
 \label{ex:app:static-ap}
Continuing Example~\ref{ex:eint}, we give the formal witness of statically inequivalence.
Recall that $\phi$ is as in Example~\ref{ex:execution} and
$\phi'=\phi \{v_1\mapsto v_2\}$
\ie the frame one obtains from $\phi$ by replacing the constant
$v_1$ by the constant $v_2$.
If we let $R_o$ be the recipe
$\open(\projr(\projl(w_3)), \projr(\projr(w_3)))$
and $R_1$ be the recipe $v_1$,
one would obtain $R_o\phi\redc v_1$ and $R_1\redc v_1\theo v_1$
while $R_o\phi'\redc v_2$ but $R_1\redc v_1\nottheo v_2$.
 \end{example}

Recall that
$\obs(\tr)$ is defined as the subsequence of $\tr$
obtained by erasing all the $\tau$ actions (\ie $\tau,\taut,\taue$).

\begin{definition}
  Let $K_1$ and $K_2$ be two configurations.
  We say that $K_1$ is \emph{trace included} in $K_2$,
  written $K_1 \sqsubseteq K_2$, when,
  for any $K_1\sint{\tr}K_1'$
  there exists $K_2\sint{\tr'}K_2'$ such that
  $\obs(\tr) = \obs(\tr')$
  and $\phi(K_1') \estat \phi(K_2')$.
  They are in \emph{trace equivalence}, written $K_1 \eint K_2$, 
  when $K_1 \sqsubseteq K_2$ and $K_2 \sqsubseteq K_1$.
\end{definition}

\mysub{Diff-Equivalence}
\label{ap:model:diff}

The semantics of bi-processes is defined as expected via a relation
that expresses when  and how a bi-configuration may evolve.
A bi-process reduces if, and only if,
both sides of the bi-process 
reduce in the same way triggering the same rule: \eg,
a conditional has to be evaluated in the same way on both sides.
For instance, the {\sc Then} and {\sc Else} rules for the {\sc let}
construct are depicted in \Cref{fig:biproc:sem}.

\begin{figure*}[t]
  \centering
  $  \begin{array}{ll}
    \textsc{Let}&(\phase i \Let \; x = \choice{v_l}{v_r} \; \In \; P \; \Else \; Q\cup \p; \phi; i) \; \lrstep{\taut} \;
                  (\phase i P\{x \mapsto \choice{u_l}{u_r}\} \cup
                  \p; \phi; i)\\
    \multicolumn{2}{r}{\mbox{
    when $v_l\redc u_l$ and $v_r\redc u_r$ for some $u_l,u_r$}}\\
    \textsc{Let-Fail}&(\phase i \Let \; x = \choice{v_l}{v_r} \; \In \; P \; \Else \; Q\cup \p; \phi; i) \; \lrstep{\taue} \;
                       (\phase i Q \cup \p; \phi; i) 
                       \hfill \mbox{when $v_l\redcb$ and $v_r\redcb$}
  \end{array}$
  \caption{Two rules of the semantics for bi-processes}
  \label{fig:biproc:sem}
\end{figure*}

When the two sides of the bi-process reduce in different ways,
the bi-process blocks. The relation $\sint{\tr}_{\mathsf{bi}}$
on bi-processes is therefore defined as for processes.
This leads us to the following notion of diff-equivalence.

\begin{definition}
\label{def:diff-equiv}
A bi-configuration $K_0$ satisfies \emph{diff-equivalence} if for
every bi-configuration
$K = (\p;\phi)$ such that $K_0 \,\sint{\tr}_{\mathsf{bi}}\, K$
for some trace $\tr$, we have that:
\begin{itemize}
\item both sides generate statically equivalent frames:
  $\fst(\phi) \estat \snd(\phi)$;  
\item both sides are able to execute same actions:
  if $\fst(K) \lrstep{\alpha} A_L$ then there exists a
  bi-configuration~$K'$ such that $K \lrstep{\alpha}_{\mathsf{bi}} K'$ and
  $\fst(K') =A_L$ (and similarly for $\snd$).
\end{itemize} 
\end{definition}


\mysub{Conditions}
\label{sec:app:conditions}
\begin{example}
\label{ex:app-leaking-phases}
  Consider a voter's role process $V(\id,v)=
\phase 1 \Out(a,\id).\Out(a,v)$
  (other components are unimportant here).
  This trivial protocol is an abstraction of a registration phase
  (voter sends its identity) followed by a voting phase (voter sends
  its vote).  We show this does not ensure ballot secrecy.
  Consider the following fair execution:
  $(\S;\emptyset;1)\sint{\tr}(\emptyset;\phi;1)$
  with $\tr=\Out(a,w_\id).\Out(a,w_v).
\Out(a,w_\id').\Out(a,w_v')$
  and $\phi=\{w_\id\mapsto A, w_v\mapsto v_0, w_\id'\mapsto B,w_v'\mapsto v_1\}$.
  This execution has no indistinguishable counterpart in $\S_r$.
  Indeed, because the first message {\em reveals the identity} of the voter, 
  the attacker can make sure that the voter $A$ executes the first output (\ie $w_\id$)
  on the $\S_r$ side as well.
  After the first output $w_\id\mapsto A$, the $\S_r$ side can only output either $B$ or $v_1$
  (because $A$ votes $v_1$ in $\S_r$) but not $v_0$.
  However, because the second message {\em reveals the vote}, the attacker expects a message
  $v_0$ and can test whether $w_v$ equals $v_0$ or not.
  To summarise, the only executions of $(\S_r;\emptyset;1)$ following the same trace $\tr$
  produce frames that are never statically equivalent to $\phi$.
  Thus, this  protocol does not ensure ballot secrecy because 
  in a single phase (\ie, phase 1), there is one output revealing the identity
  of the voter and one output revealing the voter's vote.

  However, the process $V(\id,v)=\phase 1 \Out(a,\id).\ \phase 2 \Out(a,v)$ ensures
  ballot secrecy and does not suffer from the above problem.  
  The attacker cannot force $A$ to execute its first message leaking
  identity and then immediately its second message leaking its vote,
  because doing so would {\em kill} the process $V(B,v_1)$ (which is still in phase 1)
  preventing the whole execution from being fair.
  Thus, the attacker has to trigger all
  possible first-phase actions of $A$ and $B$ before moving to
  the second phase.
  After the first phase, we end up with
  the processes $\{\Out(a,v_0), \Out(a,v_1)\}$ on the $\S$ side and
  $\{\Out(a,v_1),\Out(a,v_0)\}$ on the $\S_r$ side, which are indistinguishable.
  
  Thus,
  in this first iteration, we split
  outputs revealing identity and outputs revealing votes in distinct
  phases. This enables breaking links between identity and vote.
\end{example}

\section{Proofs}
\mysub{Proofs of the Main Theorem}
\label{sec:ap:proofs-thm}

\dishoLemma*
\begin{proof}
($\Rightarrow$)
  By fairness, we deduce that, in the given execution after the action $\mathtt{phase}(k)$,
  there are processes annotated $[\id_A,v_i]$ and $[\id_B,v_j]$ both at phase $k$.
  First item of the Dishonest Condition implies that $A$ and $B$ had (disjoint) honest interactions
  in $\tr_0.\mathtt{phase}(k)$ up to phase $k$.
  This allows us to define properly names $\nID{A},\nID{B},\nV{i},\nV{j}$
  and all (disjoint) authorities sessions involved in those two disjoint honest interactions
  (note that names used in authorities starting in phases $k'>k$ can be chosen arbitrarily).
  As a slight abuse of notation, we may omit the vote or the identity from those vectors
  and write for instance $\nID{A}$ to refer to $(\id_A,\nID{A})$.
  Moving backward some $\tau$ actions, one obtains a fair execution 
  \begin{equation}
    \begin{array}{ll}
\displaystyle
((\{
V(\nID{A},\nV{i}),
V(\nID{B},\nV{j})
\}\uplus\;!\,\roles &\\
\phantom{\{ \}}
\biguplus_{A\in\roles_o}
\{A(\nID{A},\nV{i}),
A(\nID{B},\nV{j})\}
;\phi_0;1)\\
\phantom{(}\sint{\tr_0'.\mathtt{phase}(k).\tr_1'}
(\p';\phi;k)&
    \end{array}
\label{eq:lem-cf-init}
\end{equation}
with $\obs(\tr)=\obs(\tr')$ where $\tr'=\tr_0'.\mathtt{phase}(k).\tr_1'$.
We now distinguish two cases whether $\th=\tr^h$ or not to prove that the
latter execution can also be executed starting with roles $A^{\forall}$ instead of $A$.

If $\th=\tr^h$ then there is only one instantiation of the honest trace.
Therefore, voters $A$ and $B$ followed the idealised trace in $\tr_0.\mathtt{phase}(k)$ up to phase $k$.
Therefore, their executions up to the action $\mathtt{phase}(k)$
can be exactly mimicked by executions of processes $A^\forall$ for $A\in\roles_o\cup\{V\}$ with appropriate names.
Indeed, once names $\nID{\id},\nV{i}$ are
fixed, there is only one possible execution (modulo $\theo$) following the trace $\tr^h$ and this
is, by definition of those processes, an execution $A^\forall$ can play.
Moreover, the resulting processes for $A$ and $B$ right after the action $\mathtt{phase}(k)$
are the same when starting with $A(\cdot,\cdot)$ for $A\in\roles_o\cup\{V\}$ or with $A^{\forall}(\cdot,\cdot)$;
they are indeed of the form $V^k_f(\cdot,\cdot)$. Hence, the final sub-trace $\tr_1'$ can equally be executed by roles $A^{\forall}$.
Note also that the resulting frame is $\psi=\phi$ and the resulting trace has same observable actions as $\tr'$.

Otherwise (\ie $\tr^h\neq\th$). We remark that the multiset of processes
at the beginning of Execution~\ref{eq:lem-cf-init} can be reached
from the multiset of processes
$\{S_f(A,v_i),S_f(B,v_j)\} \uplus \;!\,\roles$.
The Execution~\ref{eq:lem-cf-init} can thus be played by the right side of the biprocess
$\biproc^{D}$ (or by a biprocess one can obtain from it by applying a bijection of free public constants).
%
Applying diff-equivalence of $\biproc^D$ (\ie second item of the Dishonest Condition)
and the fact that diff-equivalence is stable by bijection of free, public constants,
allows us to replace $V$ (resp. role process $A$) by
$V^\forall$ (resp. $A^\forall$) in the latter execution whilst preserving the executability
of the same observable actions (\ie $\obs(\tr')$) and leading to a frame $\psi\estat\phi$.

Next, we remark that thanks to the constraint on phases, putting
$A^i$ in parallel instead of in sequence (as done in $A^\forall$)
allows to complete the same execution.
We can also remove $!\openAllBal$ from the starting multiset whilst preserving
the executability.
We thus get the desired execution with the appropriate multiset of processes at
the beginning.
\smallskip{}

In order to conclude,
we shall prove $\Res(\tr,\phi)=\Res(\tr',\psi)$.
In the case $\th=\tr^h$, it follows from $\psi=\phi$
and the fact that $\tr'$ has same observable actions as $\tr$.
Otherwise, we assume  $\Res(\tr,\phi)\not=\Res(\tr',\psi)$ for the sake of the
contradiction.
We remark that $BB(\tr,\phi)=BB(\tr',\psi)$ follows from 
$\phi\estat\psi$ and $\obs(\tr)=\obs(\tr')$.
Further, there exists an handle $w$ such that $\Out(c_b,w)\in\tr$ (and thus
$\Out(c_b,w)\in\tr'$),
$\Psi_b[w\phi]\redc u$ for some $u$ (and thus $\Psi_b[w\psi]\redc u'$ for some $u'$ since $\phi\estat\psi$ and $\Psi_b$ is a public term),
and $\Extract(w\phi)\redc v_l\in\V\cup\{\bot\}$, $\Extract(w\psi)\redc v_r\in\V\cup\{\bot\}$ with $v_l\neq v_r$. At least on one side,
the extraction does not lead to $\bot$ (otherwise we have $\bot=\bot$).
By symmetry, we assume $\Extract(w\phi)\redc v_l\in\mathcal{V}$.
We now consider a straightforward extension of the previously considered execution of the biprocess $\biproc^{D}$
  by adding the trace $\tr^O=\tau.\In(c_u,w).\taut.\Out(c_u,w_e)$. This trace corresponds to the replication of the $\openAllBal$
  process and a usage of one instance of the oracle which tries to extract a vote from the ballot $w$.
  Note that the given execution can be extended
  with this trace on the left (the conditional holds) and because $\biproc^D$ is diff-equivalent, on the right as well.
  We call $\phi_l^O$ (resp. $\phi_r^O$) the resulting frame on the left (resp. on the right).
  By diff-equivalence, it holds that $\phi_l^O\estat\phi_r^O$.
  We remark that $w_e\phi_l\theo v_l$ and
  $w_e\phi_r\theo v_r$. Recall that $v_l$ and $v_r$ are two different public constants not involved in equations
  in $E$. Therefore, $w_e\phi_r\nottheo v_l$.
  Hence the recipe $\eq(w_e, v_l)$ (remind that $v_l$ is a public constant)
  does not fail when applied on the left (\ie on $\phi_l$) but does fail when applied 
  on the right (\ie on $\phi_r$) contradicting $\phi_l^O\estat\phi_r^O$.

($\Leftarrow$)
Using the fact that $[\id_A,v_i]$ and $[\id_B,v_j]$ had an honest interaction in $\tr_0'.\mathtt{phase}(k)$ up to phase $k$,
one can show by gluing together phase roles the existence of a similar execution of the form
  \begin{equation}
    \begin{array}{ll}
\displaystyle
((\{
V(\nID{A},\nV{i}),
V(\nID{B},\nV{j})
\}\uplus\;!\,\roles &\\
\phantom{\{ \}}
\biguplus_{A\in\roles_o}
\{A(\nID{A},\nV{i}),
A(\nID{B},\nV{j})\}
;\phi_0;1)\\
\phantom{(}\sint{\tr_0'.\mathtt{phase}(k).\tr_1'}
(\p'';\phi;k).&
    \end{array}
\label{eq:lem-cf-init-other-dir}
\end{equation}

Via a similar proof than the one for $\Rightarrow$,
we make use of the diff-equivalence of $\biproc^D$ to get an execution with real role processes when $A$ and $B$ did not
follow the idealised trace; otherwise, we get the execution with real role processes from the uniqueness of executions following
$\tr^h$. The execution from $(\{V(\id_A,v_i),V(\id_B,v_j)\}\cup\,!\roles;\phi_0;1)$ can then
be obtained by creating appropriate names since they are pairwise distinct.  
\end{proof}

\mainTheorem*
\begin{proof}
Consider a fair execution $(\S;\phi_0;1)\sint{\tr}(\p;\phi;k)$. We shall prove that
there exists a fair execution $(\S_r;\phi_0;1)\sint{\tr'}(\q;\psi;k)$ with
$\obs(\tr)=\obs(\tr')$, $\phi\estat\psi$ and $\Res(\tr,\phi)=\Res(\tr',\psi)$.
We distinguish two cases whether $k=1$ or not.

If $k>1$ then $\tr=\tr_0.\mathtt{phase}(k).\tr_1$ for some
trace $\tr_1$ that does not contain $\mathtt{phase}(\cdot)$ action.
We now apply Lemma~\ref{lem:cf} and obtain a fair execution
\begin{equation}
\begin{array}{ll}
(\{
\roles^\id(\nID{A},\nV{0}),
\roles^v(\nID{A},\nV{0}),&\\
\ \ \ \roles^\id(\nID{B},\nV{1}),
\roles^v(\nID{B},\nV{1})
\}\cup\;!\roles;
\phi_0;1)
\sint{\tr'}
(\q;\phi_1;k_f)
\end{array}
\label{t1:before-swap}
\end{equation}
for some $\tr'=\tr_0'.\mathtt{phase}(k).\tr_1'$
where all names in $\nID{A},\nID{B},\nV{0},\nV{1}$ are pairwise distinct names (except identity and vote),
such that $[\id_A,v_0]$ and $[\id_B,v_1]$ have an honest interaction in
$\tr_0'.\mathtt{phase}(k), \phi_1$ up to phase $k$,
$\obs(\tr')=\obs(\tr)$, $\phi\estat\phi_1$
and $\Res(\tr,\phi)=\Res(\tr',\phi_1)$.
We now deduce from 
the diff-equivalence of $\biproc^I$
(\ie the Honest Relations Condition, second item)
and the fact that diff-equivalence is stable by bijection of names,
an execution:
\begin{equation}
\begin{array}{ll}
(\{
\roles^\id(\nID{A},\nV{1}),
\roles^v(\nID{B},\nV{0}),&\\
\ \ \ \roles^\id(\nID{B},\nV{0}),
\roles^v(\nID{A},\nV{1})
\}\cup\;!\roles;
\phi_0;1)
\sint{\tr'}
(\q';\psi';k_f)
\end{array}
\label{t1:after-swap}
\end{equation}
with $\psi\estat\phi_1$.
Remark that in the latter execution,
processes labelled $[\id_A,v_1]$ (resp. $[\id_B,v_0]$) have an honest
interaction in $\tr_0'.\mathtt{phase}(k)$, $\psi'$ up to phase $k$ as well.
This relies on the fact that the honest trace is {\em phase-oblivious}
by the Honest Relations Condition.
Thereby, if
(i) a voter follows the honest trace up to phase $k$ for all phases having the same leaking labels isolately, then
(ii) it follows the honest trace up to phase $k$. Property (i) for $A$ and $B$
on Execution~\ref{t1:after-swap} follows from the fact this property holds
in Execution~\ref{t1:before-swap} and is transferred to Execution~\ref{t1:after-swap} by
the diff-equivalence.
For instance, processes
in $\roles^\id(\nID{A},\nV{0})$ and $\roles^v(\nID{B},\nV{1})$ follow the honest trace phase-by-phase
up to phase $k$ in Execution~\ref{t1:before-swap} (because voters $A$ and $B$ follow the honest trace up to phase $k$
in that execution).
By diff-equivalence, $\roles^\id(\nID{A},\nV{1})$ and $\roles^v(\nID{A},\nV{1})$ follow the
honest trace in Execution~\ref{t1:after-swap} up to phase $k$ and thus 
processes labelled $[\id_A,v_1]$ have an honest interaction in Execution~\ref{t1:after-swap} up to phase $k$.
We also have that $[\id_A,v_1]$ casts a ballot, if and only if, $[\id_B,v_0]$ does so. Therefore,
Execution~\ref{t1:after-swap} is fair as well.
By Lemma~\ref{lem:cf}, we deduce the existence of a fair execution
\begin{equation*}
(\S_r;\phi_0;1)\sint{\tr''}(\q'';\psi_1;k_f)
\end{equation*}
such that  $\obs(tr'')=\obs(\tr')$,
$\psi\estat\psi_1$ and
$\Res(\tr',\psi)= \Res(\tr'',\psi_1)$.
\smallskip{}

It remains to show that $\Res(\tr_0,\phi_0)=\Res(\tr'',\psi_0)$.
Since
$\Res(\tr',\psi)=\Res(\tr'',\psi_1)$
and
$\Res(\tr_0,\phi_0)=\Res(\tr',\phi_1)$,
it suffices to show 
$\Res(\tr',\phi_1)=\Res(\tr',\psi)$.
We let $\phi_l=\phi_1$, $\phi_r=\psi$,
$\vect{b}_l=\BB(\tr',\phi_l)$, and,
$\vect{b}_r=\BB(\tr',\phi_r)$.
First, we have that if $\Out(c_b,w_b)$ occurs in $\tr'$ then
$w_b$ induces a valid ballot on the left if, and only if, it induces a valid ballot
on the right. This is because $\phi_l\estat\phi_r$ and $\Psi_b[]$ is a public term (and thus induces a recipe).
Let us first assume that either $\id_A$ or $\id_B$ casts a ballot. In such a case, by fairness, both cast a ballot.
Therefore, there are two handles $w_0,w_1$ corresponding to the honest casting of $[\id_A,v_0]$ (resp. $[\id_A,v_1]$)
and $[\id_B,v_1]$ (resp. $[\id_B,v_0]$)
on the left (resp. on the right). We have that
$\Extract(w_j\phi_l)\redc v_j$
and
$\Extract(w_j\phi_r)\redc v_{1-j}$
for all $j=0..1$.
We can now split the set of valid ballots into ballots $w_0,w_1$ and the others: for $d\in\{l,r\}$,
one has $\vect{b}_d=\{w_0\phi_d,w_1\phi_d\}\uplus\{b_d^1,\ldots,b_d^l\}$
where $b_d^i=w^i\phi_d$
and $\Extract(\{w_0\phi_l,w_1\phi_l\})=
\Extract(\{w_0\phi_r,w_1\phi_r\})=\{v_0,v_1\}^\#$. It remains to show that
$\Extract(\{b_l^1,\ldots,b_l^l\})=
\Extract(\{b_r^1,\ldots,b_r^l\})$.
We actually have that 
$\Extract(b_r^i)=
\Extract(b_r^i)$ for all $1\le i\le l$
as a direct consequence of the Tally Condition.
If neither $A$ nor $B$ cast a ballot then $\Res(\tr',\phi_1)=\Res(\tr',\psi)$
stems from $\Extract(b_r^i)=\Extract(b_r^i)$ for all $1\le i\le l$.

If $k=1$ then no $\mathtt{phase}(\cdot)$ action occurs in $\tr$.
In particular, the phase roles for phase 1 can also perform this execution.
Formally, replacing $\S$ by the following process allows for performing the same execution:
$\biguplus_{A\in\roles_o\cup V}
  \{A^1(\nID{A},\nV{i}),
    A^1(\nID{B},\nV{j})\}\uplus !\roles$.
As a subset of the latter,
$\roles^\id(\nID{A},\nV{1}),
\roles^v(\nID{B},\nV{0}),
\roles^\id(\nID{B},\nV{0}),
\roles^v(\nID{A},\nV{1})
\}\cup\;!\roles$
is also able to perform the given execution.
We conclude by the Honest relation condition.
The proof that $\Res(\cdot)$ is preserved
is a particular case of the previously discussed proof for the
$k>1$ case.
\end{proof}


\section{Case Studies}
\label{ap:caseStudies}
(Resuming \Cref{sec:case:verif})
\textit{(Syntactical check)}
If (i) the vote $v_i$ does not syntactically occur at all in outputs of id-leaking phases (\ie in $\roles^\id(\nID{i},\nV{i})$)
and (ii) there is no vote-leaking phase before an id-leaking phase in the honest trace $\th$
then the condition is satisfied since item 2 trivially holds then.

We now state that $\biproc^T$ can be used to verify the Tally
Condition.
\begin{restatable}{lemma}{tally}
\label{lem:direct}
  If $\biproc^T$ is diff-equivalent
  then $\biproc$ is diff-equivalent and
  the protocol ensures the Tally Condition.
\end{restatable}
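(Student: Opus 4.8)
The plan is to split the statement into its two conjuncts and prove each by exploiting the fact that $\biproc$ sits as a parallel component of $\biproc^T = \biproc \uplus \{!\openAllBal\}$, with the oracle $\openAllBal$ guarded by $\phase{k_f}$. The first conjunct---diff-equivalence of $\biproc$---is a pure monotonicity argument, whereas the second---the Tally Condition---reuses almost verbatim the tally contradiction already carried out in the proof of Lemma~\ref{lem:cf}. Since the Tally Condition asks for item~1 \emph{or} item~2 to hold for each valid ballot, it suffices to establish that item~2 always holds, and that is what I would prove.

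For the first conjunct, I would observe that any bi-execution $\biproc \sint{\tr}_{\mathsf{bi}} K$ lifts to a bi-execution $\biproc^T \sint{\tr}_{\mathsf{bi}} K \uplus \{!\openAllBal\}$ taking exactly the same steps and leaving the oracle idle. Because $!\openAllBal$ is guarded and contributes nothing to the frame while idle, the frame of $K \uplus \{!\openAllBal\}$ coincides with that of $K$, so the static equivalence of frames guaranteed by the diff-equivalence of $\biproc^T$ (Definition~\ref{def:diff-equiv}) transfers directly to $K$. For the action-matching clause, if $\fst(K) \lrstep{\alpha} A_L$ then the same transition is available in $\fst(K \uplus \{!\openAllBal\})$, carrying the idle oracle along; the matching bi-transition provided by $\biproc^T$ leaves the oracle component untouched on the left (its left projection still contains $!\openAllBal$), so by the structural synchronisation of bi-transitions---both sides must trigger the same rule at the same position---it leaves the oracle untouched on the right as well, and therefore restricts to a bi-transition $K \lrstep{\alpha}_{\mathsf{bi}} K'$ with $\fst(K') = A_L$. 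This yields diff-equivalence of $\biproc$.

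For the Tally Condition I would fix any bi-execution $\biproc \sint{\tr} \biproc'$ with frames $\phi_l,\phi_r$ and any valid ballot $\ba \in \BB(\tr,\phi_l)$ with handle $w$, and establish item~2 by contradiction. By Definition~\ref{def:bb} the extraction always succeeds, so $\Extract(w\phi_l) \redc v_l$ and $\Extract(w\phi_r) \redc v_r$ for unique $v_l,v_r \in \V \cup \{\bot\}$; if $v_l \theo v_r$ then item~2 holds, so assume $v_l \nottheo v_r$. I would then replay the same execution inside $\biproc^T$, truncating it just after $w$ has been output (the value $w\phi_l$ is fixed once $w$ enters the frame, so there is no need to complete $\tr$), advance to phase $k_f$ by rule \textsc{Next}, and spawn one copy of $\openAllBal$. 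The input $\In(c_u,w)$ with recipe $w$ is accepted; the test $\Let\,z=\Psi_b[x]$ (with $x$ bound to $w\phi_l$, resp.\ $w\phi_r$) succeeds on both sides, since $\Psi_b$ is a public term and $\phi_l \estat \phi_r$ by the first conjunct carries the validity of $w$ across; the test $\Let\,v=\Extract[x]$ succeeds on both sides; and the output adds $w_e \refer v_l$ on the left and $w_e \refer v_r$ on the right. Diff-equivalence of $\biproc^T$ would force the two resulting frames to be statically equivalent, yet the recipe $\eq(w_e,v_l)$---legal since $v_l$ is a free public constant---reduces on the left but fails on the right because $v_r \nottheo v_l$, a contradiction. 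The main obstacle is exactly this bookkeeping around the oracle invocation: checking that the guarded oracle is reachable in phase $k_f$ with $w$ still available, that the two $\Let$ tests synchronise (take the same branch) on both sides, and that the distinguishing recipe is well formed---all of which follow from static equivalence of the frames together with the fact that $\V \cup \{\bot\}$ consists of pairwise distinct free public constants.
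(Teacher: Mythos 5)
Your proposal is correct and takes essentially the same route as the paper's proof: your first part spells out the stability of diff-equivalence under removal of a parallel component, which the paper simply asserts, and your second part is exactly the oracle-based contradiction (feed the ballot to $\openAllBal$, then distinguish the frames with $\eq(w_e,v_l)$ since votes are distinct free public constants) that the paper imports by reference from the proof of Lemma~\ref{lem:cf}, there carried out on $\biproc^D$ and here replayed on $\biproc^T$.
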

\begin{proof}

  Diff-equivalence is stable by removal of processes in the initial multiset.
  Formally, for a biprocess $\biproc$,
  if $\biproc=(\{P\}\cup\q;\phi)$ is diff-equivalent then $(\q;\phi)$ is diff-equivalent.
  This implies that $\biproc$ is diff-equivalent as well.
  Now, let us show that the Tally condition holds.

  We can actually prove that the diff-equivalence of $\biproc^T$ implies that
  for any execution of $\biproc$, the two executions on both sides yield exactly the same
  tally's outcome (w.r.t.~$\Res(\cdot)$). The proof of this is the same as the proof of
  the Lemma~\ref{lem:cf}  in \Cref{sec:ap:proofs-thm}.

\end{proof}


\section{Additional explanations on the swapping technique}
\label{sec:ap:swapping}

\mysub{Example of modified protocol using swapping}
\label{sec:app:swapping}
  We consider the voter process from Example~\ref{ex:leaking-phases}; \ie
  $V(\id,v)=\phase 1 \Out(a,\id). \phase 2 \Out(a,v)$.
  We have that\linebreak[4]
  $(\{V(A,\choice{v_1}{v_2}),
V(B,\choice{v_2}{v_2})\};\emptyset;1)$ is
  not diff-equivalent since after the two outputs
  $\Out(a,A)$, $\Out(a,B)$, the resulting multiset of processes is
  $\{\Out(a,\choice{v_1}{v_2}), \Out(a,\choice{v_2}{v_1})\}$ which is
  obviously not diff-equivalent. However, if one is allowed to swap the
  order of the two processes on the right side of that biprocess, he
  would obtain
  $\{\Out(a,\choice{v_1}{v_1}),\linebreak[4] \Out(a,\choice{v_2;v_2})\}$ which is
  diff-equivalent.

\mysub{Restrictions of the swapping technique}
\label{sec:app:swaprestr}

  We here provide support for the claims we made in the introduction about the different problems the swapping approach cannot
  tackle (while our approach can).
    First, it cannot deal with honest roles present in different phases
    (except for the voter role).
      Indeed, such roles would require synchronisation barriers.
      Moreover, because a potentially unbounded number of dishonest
      voters communicate with those authorities, this requires modelling
      an unbounded number of sessions for them.
      However, in the swapping approach, 
      there cannot be a replication underneath a synchronisation
      barrier, which mean we cannot model such roles.\toRM{\footnote{Splitting the role process in two multiple processes
        exchanging data through a private channel during the synchronisation barrier is not an option because
	it does not provide the swapping capability.}}
      We encountered this problem for a simplified variant of JCJ and Belenios:
      we failed to model an unbounded number of sessions of the registrar role
      that creates and sends credentials to voters in the registration phase and
      then send encrypted credentials to the bulletin box (so that the latter can
      verify eligibility).

	Second, and similarly, it cannot tackle threat models
	{\em without a
    dishonest voter}, because this would require explicitly modelling an arbitrary number of honest
    voters. Since they are
    present in multiple phases, this would require replication.
      
      Third, when it comes to leveraging the swapping technique in ProVerif,
      spurious attacks arise
      when the protocol's security also relies on the {\em freshness of some data from previous phases}.
      The problem is that for the
      generated processes,
      ProVerif considers two different sessions of a certain phase
      using the {\em same data} resulting from {\em one single} session of a
      previous phase, as a valid execution.
      The reason is that, for a fixed swap strategy, 
      ProVerif replaces synchronisation barriers of a process $P$
      by private communications exchanging all data the process
      currently knows
      with another process $Q$ with which the swap occurs.
      Those new private communications are abstracted by the
      Horn-Clause approximations used by ProVerif:
      an input on a private channel $p$ is not consumed upon use, and can be replayed later on.
      Therefore, 
      ProVerif also explores the possibility of swapping data with an
      old session of $Q$ whose data has already been swapped before.
      This caused the spurious attacks for JCJ and Belenios (see \Cref{fig:benchmarks}); the credential being the fresh data coming
      from the registration phase and used during the voting phase.

      
      Finally, the swapping technique suffers from an {\em exponential blow up}.
    Indeed, the compiler produces $\Pi_j (n_j!)$ processes where $n_i$
    is the number of processes active at phase $i$ (\eg 3 phases,
    $n_i=3$ lead to 216 processes to verify).  This is unsurprising
    since the compiler has to generate as many processes as possible
    swaps, to guess one that yields security.
    The same problem arises in Tamarin implementing the swapping technique through multisets~\cite{dreier2017beyond}
    since the tool may have to explore all possible shufflings of multisets for each phase.


\end{document}